%
%
%
\documentclass[%
 reprint,
 superscriptaddress,
 groupedaddress,longbibliography,
 nofootinbib,
 amsmath,amssymb,
 aps,
 prx,
floatfix,
]{revtex4-2}
\usepackage{booktabs}
\usepackage{almpaper}

\DeclareFontFamily{U}{FdSymbolD}{}
\DeclareFontShape{U}{FdSymbolD}{m}{n}{<-> s * FdSymbolD-Book}{}
\DeclareSymbolFont{fdarrows}{U}{FdSymbolD}{m}{n}
\DeclareMathSymbol{\nperp}{\mathrel}{fdarrows}{225}

\newacronym{hnn}{HNN}{Hopfield neural network}
\newacronym[longplural={associative memories}]{am}{AM}{associative memory}
\newacronym[longplural={quantum associative memories}]{qam}{QAM}{quantum associative memory}
\newacronym{dfs}{DFS}{decoherence-free subspace}
\newacronym{gus}{GUS}{geometrically uniform states}
\newacronym{gksl}{GKSL}{Gorini–Kossakowski–Sudarshan–Lindblad}

\newcommand{\disp}[1]{\mathcal{D}[#1]}
\newtheorem{proposition}{Proposition}

\graphicspath{{figs/}}

\glsdisablehyper

\newcommand{\bh}{\cB(\hilbert)}

\begin{document}

\title{Theoretical framework for quantum associative memories}

\author{Adrià Labay-Mora}
\email{alabay@ifisc.uib-csic.es}
\author{Eliana Fiorelli}
\author{Roberta Zambrini}
\author{Gian Luca Giorgi}
\email{gianluca@ifisc.uib-csic.es}%
\affiliation{%
 Institute for Cross-Disciplinary Physics and Complex Systems (IFISC) UIB-CSIC, Campus Universitat Illes Balears, 07122 Palma de Mallorca, Spain.
}%

\date{\today}
\begin{abstract}
   Associative memory refers to the ability to relate a memory with an input and targets the restoration of corrupted patterns. It has been intensively studied in classical physical systems, as in neural networks where an attractor dynamics settles on stable solutions. Several extensions to the quantum domain have been recently reported, displaying different features. In this work, we develop a comprehensive framework for a quantum associative memory based on open quantum system dynamics, which allows us to compare existing models, identify the theoretical prerequisites for performing associative memory tasks, and extend it in different forms. The map that achieves an exponential increase in the number of stored patterns with respect to classical systems is derived. We establish the crucial role of symmetries and dissipation in the operation of quantum associative memory. Our theoretical analysis demonstrates the feasibility of addressing both quantum and classical patterns, orthogonal and non-orthogonal memories, stationary and metastable operating regimes, and measurement-based outputs. Finally, this opens up new avenues for practical applications in quantum computing and machine learning, such as quantum error correction or quantum memories.
\end{abstract}

\maketitle

\section{Introduction}

A system that is able to dynamically retrieve a set of pre-stored information can be generically referred to as \gls{am}, a concept that has its roots in neurophysiology \cite{hebb2005organization} and has been developed in the context of artificial intelligence. In 1982 a system was designed to function as an associative memory, the \gls{hnn} \cite{hopfield1982am}. It consists of an all-to-all network of classical spins, modeling neurons in active (+1) or inactive (-1) states, evolving to minimize a certain energy function through repeated network updates. This drives the system to settle into one of many stable configurations, the one associated with a stored memory, or pattern. The \gls{hnn} is indeed characterized by an attractor dynamics that enables the retrieval of a given pattern from a corrupted initial state. This features \glspl{am} as content-addressable memories, to be distinguished from Random-Access Memory where data is accessed based on specific addresses instead of content \cite{jaeger1997microelectronic}. A distinction can also be set between \glspl{am} and another common application of neural networks, such as classification. While \gls{am} focuses on retrieving patterns from distorted or incomplete inputs, classification tasks involve assigning inputs to specific categories based on learned features  \cite{zhang2000neural}.

In the quest to enhance and extend the capabilities of \glspl{am}, quantum realizations of these systems have been proposed. Indeed, in a broader context, the success of neural networks in diverse applications --such as image and speech recognition, natural language processing, and autonomous systems-- is driving innovation beyond classical settings establishing the burgeoning field of quantum machine learning \cite{biamonte2017quantum,cerezo2022challenges}. Recently, several different approaches have modeled quantum versions of \glspl{am}. The first proposals were reported in the nineties during the advent of quantum computing, mostly dealing with circuit-based approaches, and do not necessarily replicate the specific dynamics and functions of classical \gls{am}. Many of these digital models consist of variations of the Grover search algorithm \cite{ventura2000qam,trugenberger2001probabilistic,diamantini2006pattern}, or quantum implementations of perceptrons \cite{cao2017quantum,miller2021quantum}. Digital approaches have been employed in pattern classification tasks, including particle tracks in high-energy physics \cite{quiroz2021hep}, and genetic sequences \cite{lloyd2018qhnn}. While inspired by the classical \gls{hnn}, such models cannot be regarded as proper \glspl{am}, as they lack the association property, as we will discuss later.

Besides digital models, a second and more recent approach, which we refer to as analog, explores the dynamics of (open) quantum systems for realizing quantum instances of \glspl{am}. Here, generalizations of \glspl{hnn} range from two-level quantum systems to qudits, in both closed \cite{inoue2011pattern, glaser2009nuclear,das2023quantum} and open quantum systems \cite{rotondo2018open,fiorelli2019accelerated,fiorelli2021potts, bodeker2023optimal}. Some analog approaches deal with the derivation of effective \gls{am} models that exploit a quantum substrate. Examples include multimode Dicke-models  \cite{fiorelli2020signatures,carollo2021dicke} and confocal cavities QED systems \cite{MarshEtAl_ccqed_21}. These models embed patterns via classical learning rules. Additional works focus on unleashing the storage of \textit{quantum} patterns by exploiting quantum walks \cite{petruccione2014walks,china2019walks} or single driven-dissipative resonators \cite{labay2022memory,labay2023squeezed}. 

Alongside proposing different models and implementations for \gls{qam}, a major motivation in this emerging field is to understand the potential advantages of using quantum mechanics in these systems, and how quantum effects can improve their performance. Some current literature contributions focus on quantifying {the storage capacity, which refers to} the amount of information (memories) that can be stored in a system of a given size. Amongst the analog models that have been proposed, many of them operate in a vanishing storage capacity regime and deal with classical patterns \cite{rotondo2018open, fiorelli2020signatures, fiorelli2021potts}. These works account for quantum effects inducing, e.g, new dynamical phases \cite{rotondo2018open, fiorelli2021potts}, or speed-ups in the retrieval of information \cite{fiorelli2019accelerated}. The limits of storage capacity have recently been the subject of research in a number of different models that are presented as \gls{qam} instances \cite{MarshEtAl_ccqed_21, bodeker2023optimal, labay2022memory, lewenstein2021storage}. Some of the proposals do not exhibit any improvement compared to the classical counterpart \cite{bodeker2023optimal}, while other instances seem to identify a potential advantage \cite{labay2022memory, lewenstein2021storage}. Overall, this diverse collection of proposals is defining the emergent field of \gls{qam}, but a general framework that can describe and include the distinct instances of \glspl{qam} is still lacking. Consequently, performing meaningful comparisons amongst different models or identifying the potential of non-classical approaches remains a challenge. 

The objective of this work is to develop a comprehensive theory framework for \gls{qam}, beyond existing model-specific results, by providing a unified foundation for understanding the working principles of this function. Assuming a general approach, our starting point is the set of necessary properties that a generic open quantum system must show to be regarded as an associative memory. This will allow us to bound the capacity of quantum states that can be stored by these kinds of systems and compare it with their classical equivalents, to frame both classical and quantum patterns, and to establish the presence of symmetries, through the definition of basins of attraction, as the enabling mechanism for \glspl{qam}. Once the operative conditions for a \gls{qam} are identified, not only one can design \gls{qam} channels to store patterns in stable states but exploit the metastable phase in open quantum systems to store transient patterns. 

The work is organized as follows: in \cref{sec:sa} we review some key points of classical and quantum \glspl{am}, that we employ in \cref{sec:framework} to provide a general definition; the result of the latter allows us to build a general quantum channel, which is compatible with an \gls{am}, and it is given in \cref{sec:construct_qam} for both orthogonal and non-orthogonal memories.  The bound to the amount of information that can be stored in such a \gls{qam} is addressed in \cref{sec:sc}. The full characterization of systems that provide \gls{qam} is addressed in \cref{sec:symmetries,sect:meta}, respectively discussing the role of symmetries and the possible extension to metastable patterns. Finally, some physical instances of \gls{am} models are revisited within the introduced framework in \cref{sec:examples}, followed by conclusive discussion in \cref{sec:disc}.

\begin{figure}
    \centering
    \includegraphics[width=\linewidth]{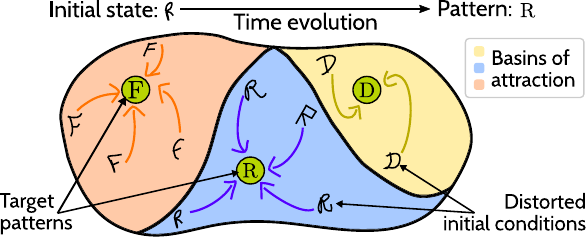}
    \caption{The phase space of a classical system powered with \gls{am} capabilities can be divided into different regions called basins of attraction. Each basin contains a stable fixed point of the dynamics, which is regarded as the pattern or memory. Then, any other initial condition belonging to the same basin will naturally converge to the corresponding pattern. Thus, the initial corrupted or distorted state is associated with the corrected version of the state, the pattern.}
    \label{fig:basins_classical}
\end{figure}

\section{Preliminaries} \label{sec:sa}

In this section, we introduce some preliminary concepts and report results on the storage capacity of classical and quantum \glspl{am}. In the classical regime, the \gls{hnn} \cite{hopfield1982am} is the most paradigmatic example of a general content-addressable system endowed with a number of fixed-point attractors, as anticipated in the Introduction. Patterns are locally stable states, each of them being related to a basin of attraction, a set of states that are dynamically evolved towards the corresponding memory. A sketch can be seen in \cref{fig:basins_classical}, where three fixed points encode letters that are used to correct distorted inputs. The collection of patterns is typically encoded via a learning rule in the neural connection, which, together with the dynamical evolution, characterizes the retrieval of information. These synaptic weights are chosen through a given dependency in terms of the states that serve as memories, one of the most employed being the Hebbian prescription \cite{amit1985spinglass}. Details about the \gls{hnn} dynamics and Hebbian rule can be found in \cref{apx:hopfield}.

An important figure of merit that characterizes \glspl{am} is the \textit{storage capacity}, defined as the maximum number of states that can be stored in a $n$-sized system,
\begin{equation} \label{eq:def_storage_capacity}
    \alpha = \frac{\text{ maximum number of patterns}}{\text{system size}}\ .
\end{equation}
For each \gls{am} with a given learning rule, the corresponding storage capacity $\alpha$ can be derived. For instance, a \gls{hnn}, equipped with uncorrelated patterns and Hebbian prescription, can store up to $0.138 n$ memories \cite{amit1985capacity}. When considering other learning rules, such as, e.g., the pseudo-inverse learning rule \cite{posner1987logcapacity} or nonlinear interactions \cite{demircigil2017model}, the corresponding limiting value $\alpha$ can increase, e.g., reaching $n^2/\ln{n^2}$ in the case of correlated patterns \cite{Willshaw1969}. The problem of storage capacity has been posed in more general terms in the seminal works of E. Gardner \cite{gardner1988capacity, gardner1988spaceinteraction}. Here, a set of patterns is imposed to be stationary states, while leaving the learning rule as a free parameter. As a result, a bound {is found} on the maximum value that the storage capacity can take for an entire class of \gls{hnn}-type \glspl{am}, i.e., irrespective of the specific learning prescription.  Such a bound  {is often}   referred to as a \textit{critical} storage capacity, and denoted as $\alpha_c$, to distinguish it from the storage capacity $\alpha$ calculated when a learning rule is defined. In Gardner's approach, the critical storage capacity of a \gls{hnn} is analyzed as a function of both the degree of pattern correlation and the size of the basin of attraction. In the limit of uncorrelated patterns and vanishing basin of attraction, the critical storage capacity reads $\alpha_c = 2 $, decreasing when enlarging the basin of attraction, and increasing when permitting correlated patterns \cite{gardner1988capacity}.

Concerning the issue of quantifying the storage capacity of \glspl{qam} {looking for a possible quantum advantage}, two main directions have been followed. Several research contributions tackle specific instances of \glspl{qam}. In this case, some results point out possible improvements with respect to the Hebbian limit \cite{MarshEtAl_ccqed_21, labay2022memory}, while other platforms are shown to behave similarly or worse than classical instances \cite{bodeker2023optimal}. Alternatively, recent contributions aim to provide more general bounds on the critical storage capacity of quantum \glspl{am} \cite{bodeker2023optimal, lewenstein2021storage}. In these scenarios, and in the same spirit of Gardner's program, one can define a quantum system undergoing retrieval dynamics while leaving the learning rule as a free parameter. In this respect, \ccite{lewenstein2021storage} shows that a quantum neural network behaving as a \gls{qam} can outperform the critical capacity of classical counterparts when renouncing any basin of attraction. 

As previously stated, establishing the extent of the applicability of the aforementioned outcomes remains a challenge, as a theoretical framework must still be defined. This is required to support both patterns as quantum states, similar to Refs. \cite{labay2023squeezed, lewenstein2021storage}, as well as memories exhibiting finite basins of attraction \cite{bodeker2023optimal, fiorelli2020signatures}. Tackling such an issue can allow us to shed light on $i)$ the general form of a \gls{qam}, and advance on the question as to whether the $ii)$ the critical storage capacity of the latter can outperform the classical \gls{am}.

In the two following sections, we combine the general approach that exploits the evolution of an open quantum system, as introduced in \ccite{lewenstein2021storage}, with Hopfield's original idea \cite{hopfield1982am} of dynamical systems displaying finite basin of attractions for each pattern. With these tools, we will tackle the issue $i)$ and $ii)$,  to characterize the properties and form of a \gls{cptp} channel for \gls{am}. Moreover, we will compute the storage capacity for different scenarios, particularly distinguishing the tasks where the retrieval of information is done with or without a measurement. 

\section{Theoretical framework} \label{sec:framework}

To address the limitations of current approaches and build a comprehensive {theory}, we start by framing the original definition of \gls{am} \cite{hopfield1982am} into the quantum formalism. Here, the pure (mixed) states of a physical system are represented by elements of a Hilbert space $\hilbert$ ($\bh$ space of bounded linear operators on $\hilbert$), and the dynamical evolution is described through a quantum channel, say $\Lambda$. This general formulation allows us to identify which properties and limitations characterize a generic (open) quantum system with $(\hilbert, \Lambda)$, that can be regarded as a \gls{qam}.

Quantum maps or channels are a key tool for describing the dynamics of quantum systems. They can be employed to formalize the continuous dynamics of open quantum systems undergoing Markovian evolution \cite{wolf2012quantum,breuer2002theory}, as well as discrete operations in quantum computation. Examples include noise effects, several types of qubit errors, and measurement processes \cite{nielsen2010quantum}. In general, a quantum channel $\Lambda$ is an operator that transforms a state $\rho \in \bh$ into another state $\Lambda(\rho) = \rho' $, where $ \rho' \in \cB(\hilbert')$ \footnote{In this work, we will focus on completely positive trace-preserving (\gls{cptp}) channels that map elements of $\bh$ to itself.}, the simplest example being a unitary evolution, $\Lambda(\rho) = U \rho U^\dagger$. 
The evolution of a state $\rho \in \bh$ by means of a \gls{cptp} map reads
\begin{equation}
    \Lambda(\rho) = \sum_{\alpha=1}^t K_\alpha \rho K_\alpha^\dagger\ ,
\end{equation}
where $t \le (\dim\hilbert)^2$, and  $\{ K_\alpha \}$ represents a set of Kraus operators, satisfying \cite{kraus1983states}
\begin{equation} \label{eq:general_cptp_cond}
    \sum_{\alpha=1}^t K_\alpha^\dagger K_\alpha = \Id\ .
\end{equation}
With the above definitions, we now identify the key properties and conditions that a quantum system with $(\hilbert,\Lambda)$ must possess to function as a \gls{qam}. 

As we anticipated, a classical \gls{am} requires a particular set of states to be stable fixed points of the dynamics. This guarantees that states representing the correct patterns are left unchanged by the dynamics and no information is lost (we will see a generalization in terms of metastable states in \cref{sect:meta}). We thus require that \textbf{(condition C1)} the set of $M$ states representing the patterns, $\{ \rho_\mu \in \bh \}_{\mu=1}^{M}$, are fixed points of the \gls{cptp} map  \cite{arias2002fppovm,wolf2012quantum,watrous2018theory}
\begin{equation} \label{eq:cptp_fixed_point}
    \Lambda(\rho_\mu) = \rho_\mu \ , \qquad \mu=1,\dots,M \ .
\end{equation}
For \gls{cptp} maps acting on finite-dimensional Hilbert spaces, condition \eqref{eq:cptp_fixed_point} admits at least one solution \cite{evans1977generators} where maps with just one fixed point are said to be ergodic \cite{burgarth2013ergodic}.  The existence of multiple fixed points has been extensively studied in the literature for both finite \cite{baumgartner2012structures} and infinite dimensional cases \cite{CarboneP_RepMP_2016}. Of course, any convex combination of these states is also a fixed point, an occurrence that we will further comment on at the end of the Section [see \cref{eq:def_spurious}]. To encompass the most general case of multiple fixed points, we may introduce the notion of maximal invariant subspace, $\mathcal{S}$, that, loosely speaking, represents the largest collection of states within the Hilbert space that remains unchanged under the action of the map. More precisely, given a state $\rho$ with support $\mathrm{supp}(\rho) \subset \mathcal{S}$, then $\mathrm{supp}[\Lambda(\rho)] \subset \mathcal{S}$, where  $\mathrm{supp}(X)$ is the set of eigenvectors of $X$ orthogonal to its kernel (i.e. with nonvanishing eigenvalues). As a note, eigenstates of unitary maps do not represent patterns as defined in \cref{eq:cptp_fixed_point}, except for the trivial identity map.

Secondly, we introduce the concept of decaying space $\cD = \cS^\perp$ complementary to the stable subspace $\cS$. Such subspace encompasses all quantum states $\ket\omega$ whose evolution under (many $r$) repeated actions of $\Lambda$ vanishes in $\cD$, i.e. $\cD = \{ \ket{\omega} \notin \mathcal{S} \mid  \lim_{r\to\infty} \Lambda^{r}(\ketbra{\omega})  \in \mathcal{B(\mathcal{S})}  \}$ \cite{baumgartner2012structures,CarboneP_RepMP_2016}. 
In other words, the states in the decaying subspace are mapped through $\Lambda$ into a state in the stable subspace. This is necessary but not sufficient for a \gls{qam} as we require that a sub-collection of states in $\cD$ are associated with a particular pattern $\rho_\mu$. To this end, we require that \textbf{(condition C2)} for each fixed point $\rho_\mu$ there exists a region of the Hilbert space, $\cD_{\mu}$ (decaying space), enclosing all the states that converge to the corresponding fixed point, $\rho_\mu$, under the action of the map $\Lambda$ (see \cref{fig:basins_quantum}). Then, analogous to the classical basin of attraction depicted in \cref{fig:basins_classical}, the quantum basin of attraction of a given fixed point is the combined set of the decaying space and the fixed point itself, i.e. $\cD_{\mu} \cup \{ \rho_\mu \}$. We stress the importance of modeling the presence of non-vanishing decaying spaces. Indeed, this feature excludes the trivial identity map from \gls{qam} and identifies the association property of the \gls{am}. We further note that beyond classical \gls{am}, its necessity is also recognized in (open) quantum systems performing quantum state classification tasks \cite{MarshallCVZ_19_PRA,schuld2014quest,petruccione2014walks}.

The presence of a decomposition of the Hilbert space into a stable subspace $\cS$, containing the patterns, and a decaying subspace $\cD$, containing the classification information, already restricts the  \gls{cptp} map. Indeed, the Kraus operators take the form  \cite{albert2019asymptotics}
\begin{equation} \label{eq:general_form_kraus}
    K_{\alpha} = \left( \begin{array}{c|ccc}
        K_{\alpha}^{\mathrm{S}} && K_{\alpha}^{\mathrm{SD}} & \\ \hline
         &&& \\
        0&& K_{\alpha}^{\mathrm{D}} & \\
        &&&
    \end{array} \right) .
\end{equation}
Here, $K_\alpha^{\mathrm{S}}$ ($K_\alpha^{\mathrm{D}}$) evolves the state in the invariant (decaying) subspace, and $K_\alpha^{\mathrm{SD}}$ plays the role of mapping the states from the decaying subspace into the stable one. We will refer to this term as cross-term. The zero block in the lower-left corner ensures that no information escapes from the stable {state space}.

\begin{figure}
    \centering
    \includegraphics[width=\linewidth]{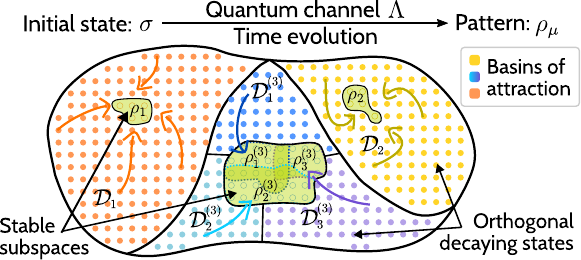}
    \caption{Quantum model for \gls{am}. The points represent the elements in a basis of the Hilbert space $\hilbert$, and the quantum states $\rho_\mu$ represent the patterns to be stored. Here, $\rho_1$ and $\rho_2$ are orthogonal patterns while the states $\rho_\ell^{(3)}$ represent non-orthogonal patterns. The figure highlights the division of the Hilbert space into decaying subspaces $\cD_\mu$, containing the initial conditions that converge towards the corresponding pattern.}
    \label{fig:basins_quantum}
\end{figure}

Finally, we enforce that \textbf{(condition C3)} any two states from different decaying spaces cannot be associated with the same memory $\rho_{\mu}$. To this end, we prevent the regions $ \lbrace \cD_{\mu} \rbrace$ from sharing common states, by imposing their mutual orthogonality \cite{ramsauer2020hopfield}. Although it may appear restrictive, we will see that this condition is fulfilled in physical quantum systems, like, e.g., those displaying symmetries \cite{buca2012symmetries,albert2014symmetries,minganti2018spectral}, as well as classical systems with multiple stable fixed points \cite{strogatz2018nonlinear}.

In summary, a quantum system with $(\hilbert, \Lambda)$ can function as a \gls{qam} if it meets the following requirements: (C1) patterns are quantum states $\rho_\mu$ that are fixed points of the map $\Lambda$ as in \cref{eq:cptp_fixed_point}; (C2) for each pattern $\rho_{\mu}$ there exists a non-vanishing decaying space, $\cD_\mu$; and (C3) the decaying spaces are mutually orthogonal. 
 
We can now formally define the necessary ingredients for a Quantum Associative Memory as follows:
\begin{definition} \label{def:qam}
 A system with Hilbert space $\hilbert$, and \gls{cptp} map $\Lambda$ functions as a \gls{qam} if there exists a set of $M$ fixed point of the map $\{ \rho_\mu \in \bh \}_{\mu=1}^M$, each one associated with a non-empty decaying subspace $\cD_\mu \subset \cS^\perp$, with $\cS$ the maximal invariant one. The subspaces $\cD_{\mu}$ are orthogonal one another, $\cD_\mu\perp\cD_\nu\ \forall\mu\neq\nu$, and
    \begin{equation} \label{eq:def_assoc_cond}
        \lim_{r\to \infty} \Lambda^r(\sigma) = \rho_\mu \qquad \forall \sigma\in \cB(\cD_\mu) \ .
    \end{equation}
\end{definition}

As commented before, each decaying subspace $\cD_\mu$ has a non-vanishing dimension, $\dim \cD_\mu =d_\mu > 0\ \forall\mu$, so each memory pattern can be reached from at least one decaying state. Moreover, since we imposed disjointness on the decaying spaces $\cD_\mu \cap \cD_\nu = \emptyset$ $\forall \mu \neq \nu$, the evolution does not allow states from different basins to mix~\cite{ramsauer2020hopfield}. It also follows, from the disjointness of the decaying spaces, that the total decaying subspace $\cD$ is the direct sum of the individual decaying subspaces $\cD_\mu$: $\cD = \bigoplus_{\mu=1}^M \cD_\mu$; where $M$ is the number of fixed points.

Although Definition \ref{def:qam} shares similarities with its classical counterpart, it is worth noticing that there is a fundamental difference between the two descriptions. While a classical \gls{hnn} evolves towards its local stable states through nonlinear dynamics, the dynamics of a quantum system is inherently linear due to the fundamental principles of quantum mechanics \cite{cerezo2022challenges}. As a consequence, given a \gls{qam} as defined above, any classical mixture of the set of patterns is also left unchanged by the action of the map. For instance, if we take two memory states $\rho_\mu$ and $\rho_{\nu}$, their classical mixture $ p \rho_\mu + (1 - p) \rho_\nu\ $ is an invariant state, as it is
\begin{equation} \label{eq:def_spurious}
    \Lambda[p \rho_\mu + (1 - p) \rho_\nu] = p \rho_\mu + (1 - p) \rho_\nu\ .
\end{equation}
Moreover, it is straightforward to see that each of the classical mixtures can be related to a new corresponding basin of attraction. Referring to the above example, any state belonging to $\cD_{\mu} \cup \cD_{\nu} $ is associated with a classical mixture of the respective patterns. As a result, for a \gls{qam} obeying the \cref{def:qam}, and displaying a set of $M$ patterns, any classical mixture of the latter behaves as an additional stable fixed point. By adopting the nomenclature employed in classical associative memories, classical mixtures of patterns can be referred to as \textit{spurious} memories -- states that are local minima but do not belong to the family of patterns. It is worth noticing that, even though classical \gls{am} models as the \gls{hnn} exhibit spurious patterns \cite{amit1985spinglass}, this occurrence is not systematic. Only certain combinations of classical patterns, and only in certain parametric regimes, will behave as stable fixed points of the dynamics. This is instead a common feature of all \glspl{qam} and we will discuss how this potential limitation can be handled in~\cref{sec:construct_qam}.

We also point out a distinctive capability of \gls{qam} that does not have any classical counterpart. Depending on the tasks that one delineates for the patterns upon retrieval, a \gls{qam} can support either a quantum output or a classical one. The first one refers to any set of (combination of) patterns that is retrieved by the \gls{qam} and does not require any further process, in this sense consisting of a quantum output. Suppose instead that one needs to access classical information. In that case, a measurement process can be further performed, leading to a probabilistic result of the patterns that matched the initial state. The measurement process may be seen as a nonlinear activation function which may produce an incorrect output~\cite{cao2017quantum}. However, the possibility of repeating the process a statistically significant number of times leads to additional information about the nature of the initial state, such as its similarity not only to the closest pattern, but also to the rest of the patterns \cite{schuld2014pattern}. We will comment on the differences between quantum and classical outputs in \cref{sec:sc}, where we will study the storage capacity properties of quantum associative memories. 

\section{Quantum maps for associative memory} \label{sec:construct_qam}

In this section, we provide an explicit construction of the quantum channel $\Lambda$ to realize a \gls{qam} according to \cref{def:qam} and discuss how a learning rule emerges in this formalism. For pedagogical reasons, we divide the analysis into two parts. First, in \cref{sec:orthogonal}, we build a map storing $M_\perp$ orthogonal quantum states $\{ \rho_\mu \}$ as fixed points; this is in analogy to classical models which require nearly orthogonal patterns for error-free retrieval \cite{hopfield1982am}. Then, in \cref{sec:nonrothogonal} we consider a more general problem where besides storing $M_\perp$ orthogonal patterns $\{ \rho_\mu \}$, we also allow for the storage of $M_\nperp$ non-orthogonal quantum patterns $\{ \rho_\ell \}$. Hence, we demonstrate how a quantum formulation of \gls{am} can store arbitrary quantum states as patterns, without restriction on the dimension of the stable subspace. The notation introduced in this section is summarized in \cref{tab:notation} for reference.

\begin{table}
    \centering
    \caption{Notation.}
    \label{tab:notation}
    \begin{tabular}{p{0.25\linewidth}p{0.7\linewidth}}
        \toprule 
        \textbf{Symbol} & \textbf{Definition} \\ \midrule
        $\hilbert$ & Complete Hilbert space \\
        $\cB(\bullet)$ & Space of bounded linear operators on $\bullet$ \\
        $\cS/\cD$ & Stable/Decaying subspace of $\hilbert$ \\
        $N^S/N^D$ & Dimension of stable/decaying subspace \\
        $N$ & Total Hilbert space dimension \\
        $\cS_\mu$ & Irreducible stable subspace \\
        $\cX_\tau$ & Decoherence-free subspace (DFS) \\
        $C_\perp / C_\nperp$ & Number of irreducible/decoherence-free subspaces \\
        $\rho_\mu$ & Orthogonal pattern spanning $\cS_\mu$ \\
        $\rho_\ell^{(\tau)}$ & Non-orthogonal pattern in $\cX_\tau$ \\
        $M_\perp / M_\nperp$ & Number of orthogonal/non-orthogonal patterns \\
        $m_\nperp^{(\tau)}$ & Number of non-orthogonal patterns in $\cX_\tau$ \\
        $M$ & Total number of patterns \\
        $\cD_\mu/\cD_\ell^{(\tau)}$ & Decaying subspace of pattern $\rho_\mu$/$\rho_\ell^{(\tau)}$ \\
        $s_\mu$/$s_\tau$ & Dimension of $\cS_\mu/\cX_\tau$ \\
        $d_\mu / d_\ell^{(\tau)}$ & Dimension of $\cD_\mu/\cD_\ell^{(\tau)}$ \\
        $\hilbert_\mu / \hilbert_\ell^{(\tau)}$ & Basin of attraction of $\rho_\mu$/$\rho_\ell^{(\tau)}$ \\
        $\ket{\mu_j} / \ket{\tau_j}$ & $j$-th basis element of $\cS_\mu/\cX_\tau$ \\
        $\ket*{\omega_x^{(\mu)}} / \ket*{\omega_x^{(\tau,\ell)}}$ & $x$-th basis element of $\cD_\mu/\cD_\ell^{(\tau)}$ \\ \bottomrule
    \end{tabular}
\end{table}

\subsection{Orthogonal patterns} \label{sec:orthogonal}

Let us consider the finite-dimensional Hilbert space $\hilbert$ with $ \dim\hilbert = N$, and let $\{ \rho_\mu \in \bh \}_{\mu=1}^M$ be the $M = M_\perp$ orthogonal patterns. The space features an orthogonal decomposition into stable and decaying parts such that $\hilbert = \cS \oplus \cD$ where, as commented in \cref{sec:framework}, the decaying part can be further decomposed into $M$ orthogonal blocks $\{ \cD_\mu \}_{\mu=1}^M$ (one for each pattern). Similarly, the stable subspace must be further decomposed into $M$ orthogonal blocks $\{\cS_\mu\}_{\mu=1}^M$ where each $\cS_\mu$ is the support of the pattern $\rho_\mu$, $\cS_\mu = \supp(\rho_\mu)$~\cite{CarboneP_RepMP_2016,baumgartner2012structures} to satisfy condition C1. These subspaces give rise to the maximal invariant subspace as $\cS = \bigoplus_\mu \cS_\mu \subset \hilbert$ where $ \dim \cS_\mu = s_\mu \ge 1$ and $\dim \cS  =  \sum_\mu s_\mu = N^S < N$ the corresponding dimensions. The complementary part of the Hilbert space which is not preserved by the map spans the decaying subspace of dimension $ \dim \cD = N - N^S = N^D$, where the dimension of each block is $d_\mu = \dim \cD_\mu$. Then, due to the orthogonality between blocks in both stable and decaying parts, we can define the basins of attraction for each pattern as $\hilbert_\mu = \cS_\mu \cup \cD_\mu$  such that $\hilbert = \bigoplus_{\mu=1}^M \hilbert_\mu$.

In our scenario, we want to guarantee that the association of a set of states to a specific pattern, say $\rho_\mu$, only occurs within subspaces corresponding to the same label $\mu$. As such, also the Kraus operators need to display a block structure, $K_\alpha = \bigoplus_\mu K_{\alpha,\mu}$, where each one of the block $K_{\alpha,\mu}$ has the form defined by \cref{eq:general_form_kraus}. Consequently, the map $\Lambda$, when restricted to the subspace $\hilbert_\mu$, has a unique fixed point $\rho_\mu$ \footnote{The map restricted to each basin of attraction $\cS_\mu$ is irreducible.}. Thus, the operators $K_{\alpha,\mu}$ must leave invariant the $\mu$-th subspace $\cS_\mu$, i.e., $K_{\alpha,\mu} \cS_\mu \subseteq \cS_\mu$~\cite{wolf2012quantum,burgarth2013ergodic}. This has been shown to imply the commutation relation $[K_{\alpha,\mu}, \rho_\mu] = 0\  \forall\alpha$~\cite{wolf2012quantum,watrous2018theory}. Moreover, the map further restricted to the stable subspace $\cS_\mu$ contains a unique, full-rank fixed point, $\rho_{\mu}$, and it can be shown that  $[K_{\alpha,\mu}^{\mathrm{S}},\rho_\mu] = 0$ \cite{albert2019asymptotics}. Hence, $\rho_\mu$ and the operators $K_{\alpha,\mu}^{\mathrm{S}}$ can be simultaneously diagonalized with respect to the same basis of eigenvectors, say $\{ \ket{\mu_j}\}_{j=1}^{s_\mu}$. In this basis, it is $\rho_\mu = \sum_{j=1}^{s_\mu} u_j^\mu \op{\mu_j}$, and we write for the operator on the stable subspace
\begin{equation} \label{eq:kraus_orthogonal_s}
    K_{\alpha,\mu}^{\mathrm{S}} = \sum_{j=1}^{s_\mu} a_{\mu,j}^\alpha \op{\mu_j}\ .
\end{equation}
This constraints each Kraus operator $K_{\alpha,\mu}^{\mathrm{S}}$ to the form of the pattern state. 

Let us now focus on the part of the Kraus operators $K_{\alpha,\mu}$ acting on the decaying subspace, which we called  $K_{\alpha,\mu}^{\mathrm{D}}$. Reminding that $\cD_{\mu} \perp \cS_{\mu}$,  and without loss of generality, we identify with $\{ \ket*{\omega_j^{\mu}} \}_j$, $j = 1,...,d_\mu,$ the orthonormal basis of $\cD_\mu$ in which  $K_{\alpha,\mu}^{\mathrm{D}}$ can be diagonalized,
\begin{equation}  \label{eq:kraus_orthogonal_d}    K_{\alpha,\mu}^{\mathrm{D}} = \sum_{x=1}^{d_\mu} c_{\mu,x}^\alpha \op{\omega_x^{\mu}} \ .
\end{equation}
It is worth noticing the label $\mu$ for states $\ket{\omega_j^{\mu}}$, which highlights that any state $\sigma$ belonging to the $\mu$-th decaying subspace, $\sigma \in \cB(\cD_\mu)$, is associated to the corresponding stable subspace $\cS_\mu$ through \cref{eq:def_assoc_cond}. Finally, the cross-term $K_{\alpha,\mu}^{\mathrm{SD}}$ must map any state from the decaying subspace $\cD_\mu$ to the stable one, $\cS_\mu$. That is, it needs to satisfy $K_{\alpha,\mu}^{\mathrm{SD}}\cD_\mu \subset \cS_\mu$. Therefore, referring to the basis $\{ \ket{\omega_x^{\mu}} \} $ and $\{ \ket{ \mu_j} \}$, the most general expression for the cross-term reads
\begin{equation}  \label{eq:kraus_orthogonal_sd}
    K_{\alpha,\mu}^{\mathrm{SD}} = \sum_{j=1}^{s_\mu} \sum_{x=1}^{d_\mu} b_{\mu, j,x}^\alpha \op{\mu_j}{\omega_x^{\mu}}\ .
\end{equation}

For \cref{eq:kraus_orthogonal_s,eq:kraus_orthogonal_sd,eq:kraus_orthogonal_d} to define a proper set of Kraus operators, we need to impose that any density operator, under the action of the map $\Lambda$, is evolved into a likewise valid density operator. Thus, the \gls{cptp} condition defined by \cref{eq:general_cptp_cond} must hold. Leaving the details of the derivation in \cref{apx:cptp_kraus}, we get
\begin{eqs}[eqs:kraus_orth_cptp_cond]
    &\sum_{\alpha} \abs{a_{\mu,j}^\alpha}^2 = 1\ , \label{eq:kraus_orth_cptp_cond_1} \\
    &\sum_\alpha (a_{\mu,j}^\alpha)^* b_{\mu,j,x}^\alpha = 0 \ , \label{eq:kraus_orth_cptp_cond_2} \\
    &\sum_\alpha \insqr{\sum_{j=1}^{s_\mu} (b_{\mu,j,y}^\alpha)^* b_{\mu,j,x}^\alpha} + \delta_{xy} \abs{c_{\mu,x}^\alpha}^2 = \delta_{xy}\ , \label{eq:kraus_orth_cptp_cond_3}
\end{eqs} 
which represent a series of constraints for the coefficients of~\cref{eq:kraus_orthogonal_s,eq:kraus_orthogonal_sd,eq:kraus_orthogonal_d}.

As a last step, it is worth noting that the form of the Kraus operators presented describes a system whose dynamics converge to a set of multiple steady states. However, the steady states are not uniquely defined as any state diagonal in the basis of $K_{\alpha,\mu}^{\mathrm{S}}$ is also a fixed point of the map. For this reason, we need to enforce the associativity condition \eqref{eq:def_assoc_cond} for a specific set of states $\{ \rho_\mu \}_{\mu=1}^{M}$, corresponding to the patterns. Hence, to guarantee that the states of the decaying subspace $\cD_\mu$ evolve to the corresponding pattern, $\rho_\mu$, the cross-term must satisfy
\begin{equation} \label{eq:kraus_assoc_cond}
    \sum_\alpha K_\alpha^{\mathrm{SD}} \op{\omega_x^\mu}{\omega_y^\mu} (K_\alpha^{\mathrm{SD}})^\dagger = \kappa_{xy}^\mu \rho_\mu \ ,
\end{equation}
where the constant $\kappa_{xy}^\mu$ determines the rate at which the operator $\op{\omega_x^\mu}{\omega_y^\mu}$ converges to the pattern $\rho_\mu$. 

The three conditions defined by~\cref{eq:kraus_orth_cptp_cond_1,eq:kraus_orth_cptp_cond_2,eq:kraus_orth_cptp_cond_3}, 
together with \cref{eq:kraus_assoc_cond}, determine the \textit{learning rule} which allows one to construct the \gls{cptp} map ruling a \gls{qam} with the quantum patterns $\{ \rho_\mu \}_{\mu=1}^M$. In \cref{apx:derivation_orthogonal} we continue the derivation of the Kraus operators by determining the value of the parameters in relation to the quantum patterns.

The \gls{cptp} map $\Lambda$ derived in this section enables an associative memory mechanism. This can be compared with \ccite{lewenstein2021storage} where a complete basis of the Hilbert space is assumed to be a set of fixed points of the map. However, imposing such a strict bound on the number of patterns implies no decaying space, i.e. the basins of attraction are reduced to the same memories. It is then concluded that the \gls{cptp} map realizing the association acts as a genuine incoherent operation \cite{marconi2022role}, i.e. as a decoherence map taking any quantum state to its classical mixture \cite{de2016gio}. It is possible to show that the map $\Lambda$ that we derived reduces to the one obtained in \cite{lewenstein2021storage}, upon restricting $\Lambda$ itself to the stable subspace (see \cref{apx:lewenstein} for details on the derivation). Moreover, our generalization permits us to endow each pattern with a finite-dimensional set of states that decay all and solely to the pattern itself. Any initial state displaying components in different basins of attraction will asymptotically evolve towards a convex superposition of patterns. However, the presence of finite-dimensional and disjoint decaying subspaces, $\cD_\mu$, permits the occurrence of pure association to one and only one pattern $\rho_\mu$.

\begin{example} \label{ex:amp_damp}
    \textit{Local amplitude damping}. To illustrate the concepts introduced so far, let us consider a simple example based on a modified amplitude damping channel acting on the subspace spanned by the pattern $\ket{\mu}$ and the decaying state $\ket{\omega_\mu}$. The Krauss operators of the map in such basis are $K_{0,\mu} = a_\mu^0\op{\mu} + \sqrt{1 - q_\mu} \op{\omega_\mu}$, $K_{1,\mu} = a_\mu^1 \op{\mu}$, and $K_{2,\mu} = \sqrt{q_\mu} \op{\mu}{\omega_\mu}$ where $q_\mu \in [0,1)$, $\sum_\alpha \abs{a_\mu^\alpha}^2 = 1$, and $a_\mu^\alpha \neq a_\nu^\alpha$ if $\mu\neq\nu$ (in such a way the Kraus operators are not proportional to the identity). Then, the complete Kraus operators are given by $K_\alpha = \bigoplus_\mu K_{\alpha,\mu}$. Here, a state $\sigma^{(0)} = \sum_{\mu,\nu} x_{\mu\nu} \op{\mu}{\nu} + y_{\mu\nu}\op{\omega_\mu}{\omega_\nu} + z_{\mu\nu}\op{\mu}{\omega_\nu} + z_{\mu\nu}^*\op{\omega_\mu}{\nu}$ transforms, under the action of the map, as
    \begin{align*}
        x_{\mu\nu} &\to \insqr{\sum_\alpha (a_\mu^\alpha)^* a_\nu^\alpha}x_{\mu\nu} + \sqrt{q_{\mu}q_\nu} y_{\mu\nu}\ , \\
        y_{\mu\nu} &\to y_{\mu\nu} \sqrt{(1-q_\mu)(1-q_\nu)}\ , \\
        z_{\mu\nu} &\to a_\mu^0 \sqrt{1 - q_\nu} z_{\mu\nu}\ .
    \end{align*}
    Thus, in the limit of infinite applications of the map, the state $\sigma^{(0)}$ is evolved toward the state $\sigma^{(\infty)} = \sum_\mu (x_{\mu\mu} + y_{\mu\mu})\op{\mu}$. This demonstrates the associative nature of the map, since the contribution of the decaying states $y_{\mu\mu}$ is driven towards the respective pattern. At the same time, the off-diagonal terms do not affect the outcome. Since they represent transitions between different basins of attraction (i.e., between different patterns $\ket{\mu}$ and $\ket{\nu}$), their presence is transient during the evolution. Still, they vanish in the limit of infinite applications of the map.
\end{example}

Beyond exhibiting the associative mechanism, the above example is also illustrative of the limitations exposed in \cref{sec:framework}. Indeed, the map leaves not only the intended memory patterns but also any convex combination of them (i.e., their mixtures) unchanged. As already commented, this occurrence can be considered the quantum equivalent of a spurious memory, a stable state that is not in the family of patterns. To overcome the presence of spurious memories in \gls{qam}, a potential strategy can be devised by leveraging quantum measurements as follows. In the case of orthogonal patterns, one can always define a projective measurement, $\{\hat{P}_\mu \}_{\mu=1}^M$, such that $\tr \hat{P}_\mu \rho_\nu = \delta_{\mu\nu}$. Then let us assume that an initial, corrupted state $\sigma^{(0)}$ whose probability to be found in a state corresponding to the pattern $\mu$ is $p_\mu^{(0)} = \tr \hat{P}_\mu \sigma^{(0)} = x_{\mu\mu}$. Via multiple applications of the map, the state $\sigma^{(0)}$  has been associated with a final, spurious state whose probability to be found in the pattern $\mu$ is $p_\mu^{(\infty)} = \tr \hat{P}_\mu \sigma^{(\infty)} = x_{\mu\mu} + y_{\mu\mu} > p_\mu^{(0)}$. Therefore, performing the quantum measurement will yield, with high probability, the pattern $\mu^*$ that best overlaps with the final state $\sigma^{(\infty)}$, up to a failure probability that depends on the contribution of the other patterns in the initial state.

\subsection{General formulation} \label{sec:nonrothogonal}

In the previous section, we studied the conditions that a quantum channel $\Lambda$ needs to satisfy, to permit the storage of states with orthogonal support.
In the following, we extend such analysis to the case of a \gls{qam} displaying general quantum states as patterns, thus including those with non-orthogonal support. Hence, we want to first identify the structure of the Hilbert space allowing for storage of $M_\perp$ orthogonal $\{ \rho_\mu \}$ ($\supp(\rho_\mu) \cap \supp(\rho_\nu) = \emptyset$ $\forall \mu, \nu$) and $M_\nperp$ non-orthogonal patterns $\{ \rho_\ell \}$ ($\supp(\rho_\ell) \cap \supp(\rho_{\ell'}) \neq \emptyset$ for some $\ell, \ell'$) to subsequently construct the \gls{cptp} map realizing the association.

\begin{figure}
    \centering
    \includegraphics[width=\linewidth]{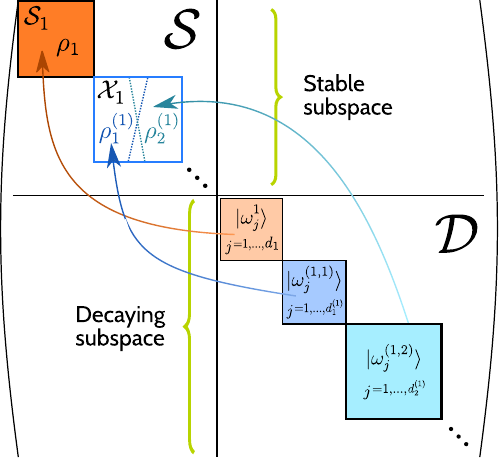}
    \caption{Action on quantum states of the most general form of the  QAM  map $\Lambda$ capable of storing orthogonal and non-orthogonal quantum patterns. Each orthogonal block in the stable subspace may correspond to an irreducible subspace $\cS_\mu = \supp \rho_\mu$ or a \acrlong{dfs} such that $\{ \rho_\ell^{(\tau)} \}_\ell \in \cB(\cX_\tau)$. The orthogonality of the states in the decaying subspace guarantees perfect association under the evolution of the map $\Lambda$ (represented by the arrows).}
    \label{fig:hilber_structure}
\end{figure}

Consistently with Definition \ref{def:qam}, the structure of the Hilbert space is again divided into two orthogonal blocks, $\cS$ and $\cD = \cS^\perp$, representing the invariant subspace, and the decaying one, respectively. With respect to the former, there appear $M_{\perp}$ fixed points of the map $\{\rho_{\mu}\}_{\mu=1}^{M_{\perp}}$, each of them spanning irreducible and orthogonal supports $\cS_\mu$. In addition, aiming at storing non-orthogonal patterns, we include a collection of $C_{\nperp}$ subspaces, $\cX_{\tau}$, each one preserving the coherences between their minimal invariant states, which hence exhibit non-orthogonal supports. In this general scenario, the decomposition of the invariant subspace reads \cite{CarboneP_RepMP_2016, baumgartner2012structures}
\begin{equation} \label{eq:structure_stable_subspace}
    \cS = \bigoplus_{\mu = 1}^{M_{\perp}} \cS_\mu \oplus \bigoplus_{\tau=1}^{ C_{\nperp}} \cX_{\tau} \ ,
\end{equation}
where both $\cS_\mu $, $\mu=1,...,M_{\perp}$ and $\cX_\tau$, $\tau = 1,..., C_{\nperp} $ are all mutually orthogonal subspaces ~\cite{baumgartner2012structures,amato2023asymptotics,CarboneP_RepMP_2016}. Each subspace $\cX_\tau$ can be in turn decomposed in terms of orthogonal subspaces, $\cX_\tau = \bigoplus_{j} \cW_{j, \tau} $. However, such a decomposition is not unique: there exists an isomorphism, $\cU_{\tau}$, mapping a given decomposition $\cW_{j, \tau}$ to an equivalent one, $ \cW'_{j, \tau}$, and thus relating all the minimal invariant states in $\cX_{\tau}$ one another. This in turn yields, in the most general case, subsystems in $\cX_\tau$ that are free from decoherence \cite{CarboneP_RepMP_2016, baumgartner2012structures}. In particular, any decomposition of $\cX_{\tau}$ in terms of $1$-dimensional $\cW_{j, \tau}$ corresponds to a \gls{dfs}, whereas any decomposition in terms of at least $2$-dimensional $\cW_{j, \tau}$ corresponds to a noiseless subsystem. In the following, we will focus on the \gls{dfs} case. It is worth also stressing that any other coherence, e.g., between $\cS_\mu$ and $\cX_\tau$, as well as between any two of either $\cS_\mu$ and $\cS_{\mu'}$, or $\cX_\tau$ and  $\cX_{\tau'}$ is not preserved by the evolution. For the sake of clarity, we write explicitly the full decomposition of the Hilbert space, $\hilbert= \bigoplus_{\mu=1}^{M_{\perp}} \cS_\mu \oplus \cD_\mu \bigoplus_{\tau=1}^{C_{\nperp}} \cX_\tau \oplus \cD_\tau$.

The presence of non-orthogonal patterns requires a more general definition of the \gls{qam} map $\Lambda$. The Kraus operators can be decomposed as $K_\alpha = \bigoplus_{\mu=1}^{M_{\perp}} K_{\alpha,\mu} \bigoplus_{\tau=1}^{C_\nperp} K_{\alpha,\tau}$. Here, the first part of the operators $\{ K_{\alpha,\mu} \}_{\mu =1}^{M_\perp}$ 
act on the related subspace $\bigoplus_{\mu =1}^{M_{\perp}} \cS_\mu \oplus  \cD_\mu$, so the results derived in \cref{sec:orthogonal} can be directly applied. The remaining part, $\{ K_{\alpha,\tau} \}_{\tau=1}^{C_{\nperp}}$, needs to be slightly reformulated to permit the storage of a given set of non-orthogonal patterns. Indeed, let us focus on the subspace $\cX_\tau$, with $\dim \cX_\tau = s_\tau$, and let $\{ \ket{{\tau}_j} \}_{j=1}^{s_\tau}$ be a basis of such a space. It is worth noticing that, as we deal with \gls{dfs}, each $\cX_\tau$ preserves any quantum state that can be written in a basis of the subspace. As a consequence, so long as there exists a non-empty decaying subspace $\cD_\tau$, and no other properties are enforced, any state belonging to $\cX_\tau$ behaves as a pattern. In the following, we aim instead at storing a \textit{specific} set of fixed points, say $\{ \rho_{\ell}^{(\tau)} \in \cB(\cX_\tau) \}_{\ell=1}^{m_{\nperp}^{(\tau)}}$. As such, notice that the total number of non-orthogonal patterns read $M_{\nperp} 
= \sum_{\tau=1}^{C_{\nperp}}m_{\nperp}^{(\tau)}$. Accordingly, a more detailed structure of the decaying subspace $\cD_\tau$ needs to be defined. 

In order to store the collection of fixed points $\rho_{\ell}^{(\tau)}$, $\ell= 1,...,m_{\nperp}^{(\tau)}$, which are in general neither orthogonal nor linearly independent, we associate, to each one of them, a decaying subspace $\cD_\ell^{(\tau)} = \sspan\{ \ket*{\omega_x^{(\tau, \ell)}} \}_{x=1}^{d_\ell^{(\tau)}}$. The states $ \{ \ket*{\omega_x^{(\tau,\ell)}} \}_{x=1}^{d_{\ell}}$ form an orthonormal basis, which is orthogonal to the stable subspace, consistently with~\cref{def:qam}. As a result, we can write the $\tau$-th decaying space through a block structure, 
\begin{equation}
\label{eq:structure_decaying_non-orthogonal}
\cD_\tau = \bigoplus_{\ell=1}^{m_{\nperp}^{(\tau)}}\cD_\ell^{(\tau)} \ .
\end{equation}
Whence, the most general expression for the Kraus operators acting on $\cX_\tau \oplus \cD_{\tau}$ is again given in terms of a block structure, as defined by \cref{eq:general_form_kraus}. The three main blocks,  $K_{\alpha, \tau}^{\mathrm{S}}$, $K_{\alpha, \tau}^{\mathrm{D}}$, which act on $\cX_\tau$, $\cD_{\tau}$, respectively, and the mixing one, $K_{\alpha, \tau}^{\mathrm{SD}}$, can be tackled separately.

Let us consider first the Kraus operator acting on the stable subspace. Here, because coherences between elements of the basis, $\op{{\tau}_j}{{\tau}_k}$, are preserved, and given that we impose the fixed point condition defined by \cref{eq:cptp_fixed_point}, the operator is proportional to the identity~\cite{albert2019asymptotics}
\begin{equation} \label{eq:kraus_gen_s}
K_{\alpha,\tau}^{\mathrm{S}} = a_{\tau}^\alpha \Id_{\tau} \ .
\end{equation}
The coefficients $a_{\tau}^\alpha$ need to satisfy the CPTP conditions \eqref{eq:kraus_orth_cptp_cond_1}, so it holds that $\sum_\alpha \abs{a_{\tau}^\alpha}^2 = 1$.

The Kraus operator acting on the decaying subspace $\cD_{\tau}$ can be further reduced. Indeed, given the decomposition in \cref{eq:structure_decaying_non-orthogonal}, it is $K_{\alpha,\tau}^{\mathrm{D}} = \bigoplus_{\ell=1}^{m_{\nperp}^{(\tau)}} K_{\alpha,\tau,\ell}^{\mathrm{D}}$, where
\begin{equation} \label{eq:kraus_gen_d}
    K_{\alpha,\tau,\ell}^{\mathrm{D}} = \sum_{x=1}^{d_\ell^{(\tau)}} c_{\ell,x}^{\alpha,\tau} \op{\omega_x^{(\tau, \ell)}}\ .
\end{equation}

Finally, we focus on the mixing term, $K_{\alpha, \tau}^{\mathrm{SD}}$, which plays a key role as it maps states belonging to the subspace $\cD_\ell^{(\tau)}$ into the corresponding pattern $\rho_{\ell}^{(\tau)}$. In general, we can write
\begin{equation} \label{eq:kraus_gen_sd}
    K_{\alpha,\tau}^{\mathrm{SD}} = \sum_{\ell=1}^{m_{\nperp}^{(\tau)}} \sum_{j=1}^{s_\tau} \sum_{x=1}^{d_\ell^{(\tau)}} b_{\ell,j,x}^{\alpha,\tau} \op{{\tau}_j}{\omega_x^{(\tau, \ell)}} \ ,
\end{equation}
which needs to satisfy the associativity condition
\begin{equation} \label{eq:kraus_gen_assoc}
    \sum_\alpha K_\alpha^{\mathrm{SD}} \op{\omega_x^{(\tau,\ell)}}{\omega_y^{(\tau, \ell')}} (K_\alpha^{\mathrm{SD}})^\dagger = \delta_{\ell\ell'} \kappa_{xy}^\ell \rho_{\ell}^{(\tau)}\ .
\end{equation}
We note that, in comparison to \cref{eq:kraus_assoc_cond} an extra delta function appears. This guarantees that cross-terms belonging to different decaying subspaces are suppressed, and do not evolve to stable states different from the patterns. In case the quantum patterns can be expressed as pure states, i.e. $\rho_\ell = \op{\psi_\ell}$, then it is possible to find an expression for the Kraus parameters in \cref{eq:kraus_gen_d,eq:kraus_gen_s,eq:kraus_gen_sd} as can be seen in \cref{apx:derivation_dfs}.

\section{Storage capacity} \label{sec:sc}

As anticipated in \cref{sec:sa}, an important quantity that characterizes \glspl{am} is the storage capacity. For a \gls{hnn}-type \gls{am} with $n$ neurons and $M$ patterns, it is given by the ratio $\alpha = M / n$ [see also \cref{eq:def_storage_capacity}]. This definition is based on the fact that classical patterns are stored as binary strings of length $n$, in which one can define at most $n$ orthogonal vectors. The expression for the storage capacity can be thus interpreted as the density of states that an \gls{am} can faithfully store (here $M$). The actual limit for a given model, i.e. the maximum storage capacity, depends on the given learning rule \cite{amit1985capacity,posner1987logcapacity,demircigil2017model}, and it is upper-bounded by the optimal or critical storage capacity, which is instead independent of the specific learning prescription \cite{gardner1988capacity, gardner1988spaceinteraction}, as summarized in \cref{sec:sa}. 

In the quantum realm, one would like to store quantum states, and subsequently quantify how many of them can be accommodated by a \gls{qam}. Here, in analogy with the number of classical orthogonal vectors that define the space of possible patterns, a meaningful quantity to consider is the Hilbert space dimension quantifying the number of possible orthogonal states of the system. For instance, a network of $n$ qubits features a $2^n$-dimensional Hilbert space. However, as we will see, not all of them can be simultaneously stored due to the restrictions imposed in \cref{def:qam}. This value, $2^n$, serves as a fundamental limit on the storage capacity of a \gls{qam} based on $n$ qubits.

More in general, a $N$-dimensional Hilbert space admits $N$ orthogonal states. Hence, we define the storage capacity of a \gls{qam} as 
\begin{equation} \label{eq:quantum_storage_capacity}
    \alpha^Q = \frac{M}{\dim \hilbert} \ ,
\end{equation}
where $M$ identifies the number of stored patterns. Similarly to the classical case, analyzing the storage capacity of a \gls{qam} amounts to determining the maximum number of patterns that can be faithfully stored.

Before going ahead, let us remark that, as already mentioned at the end of \cref{sec:framework}, we can devise two different kinds of \gls{qam} tasks, leading to different definitions of storage capacities: (i) in the first case, that we identify as quantum output, we ask the computation to give the requested quantum pattern ($\rho_\mu$) as a direct output, which is not measured at any stage. This approach is suitable for applications where the retrieved quantum information needs to be manipulated or processed further using quantum operations; and (ii) in the second scenario, corresponding to a classical output being related to the quantum state that undergoes a measurement process. The outcome of this measurement is the projected state with a probability depending on the overlap with the corresponding memory. Measurements lead to the identification of the label ($\mu$) associated with the basin of attraction as it would also occur in classification tasks. We stress that the distinctive feature in (ii) is the presence of measurement, while the entire retrieval process, including the evolution map, is ruled by quantum dynamics~\cite{cerezo2022challenges}.

We also note that this distinction is independent of the form of the patterns, which can still be generic quantum states. In the following, we will analyze the critical storage capacity of a \gls{qam} for both scenarios.

\subsection{Quantum output} \label{sec:sc_quantum}

We will now focus on retrieved states that are regarded as quantum outputs. In this case, let us first consider the critical storage capacity when restricting to orthogonal patterns. The result is contained in the following:

\begin{theorem}\label{th:sc_limit}
    The critical storage capacity of a quantum associative memory storing a set of orthogonal patterns is $\alpha_c^{\mathrm{Q}, \perp} = 1/2$, and it is saturated by rank-1 patterns, $\rho_\mu = \op{\psi_\mu}$. 
\end{theorem}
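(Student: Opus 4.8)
The plan is to count dimensions. By Definition~\ref{def:qam}, storing $M$ orthogonal patterns $\{\rho_\mu\}_{\mu=1}^M$ requires the Hilbert space to decompose as $\hilbert=\bigoplus_{\mu=1}^M(\cS_\mu\oplus\cD_\mu)$ with $\cS_\mu=\supp(\rho_\mu)$ of dimension $s_\mu\ge 1$ and each decaying block $\cD_\mu$ of dimension $d_\mu\ge 1$ (strictly positive, by condition C2 and the remark following the definition). Summing gives $\dim\hilbert=N=\sum_\mu(s_\mu+d_\mu)\ge\sum_\mu(1+1)=2M$, hence $M\le N/2$ and $\alpha^{\mathrm Q}\le 1/2$. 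This inequality holds for any admissible assignment of subspaces, so it bounds the critical storage capacity (which is the supremum over learning rules / channel constructions).

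Next I would exhibit a construction saturating the bound, which simultaneously shows that rank-1 patterns are the optimal choice. Take $N$ even, set each $s_\mu=1$ so that $\rho_\mu=\op{\psi_\mu}$ is rank one, and each $d_\mu=1$; then $N=2M$ and $\alpha^{\mathrm Q}=1/2$. It remains to check that a valid \gls{cptp} map realizing this exists: this is precisely the amplitude-damping-type channel of \cref{ex:amp_damp}, with one pattern state $\ket{\mu}=\ket{\psi_\mu}$ and one decaying state $\ket{\omega_\mu}$ per block, $K_{0,\mu}=a_\mu^0\op{\mu}+\sqrt{1-q_\mu}\op{\omega_\mu}$, $K_{1,\mu}=a_\mu^1\op{\mu}$, $K_{2,\mu}=\sqrt{q_\mu}\op{\mu}{\omega_\mu}$, which satisfies \eqref{eqs:kraus_orth_cptp_cond} and the associativity condition \eqref{eq:kraus_assoc_cond}. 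Hence the bound is achieved.

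Finally I would argue that rank-1 patterns are \emph{necessary} to saturate: if some $s_\mu\ge 2$, then $\sum_\mu(s_\mu+d_\mu)>2M$ forces $M<N/2$, so any memory containing a higher-rank pattern stores strictly fewer patterns per dimension. This pins down both the value $\alpha_c^{\mathrm Q,\perp}=1/2$ and the characterization of the saturating case.

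The main obstacle is not the dimension count itself but justifying that nothing in the construction is obstructed — i.e., that for \emph{every} even $N$ and every choice of orthonormal pattern vectors one genuinely gets a legitimate \gls{qam} map satisfying all of C1–C3 and \eqref{eq:def_assoc_cond}, not merely the dimension bookkeeping. This is handled by appealing to the explicit Kraus construction of \cref{sec:orthogonal} (and \cref{ex:amp_damp}); one should also be slightly careful that the decaying-space dimensions are required to be $\ge 1$ (a non-vanishing decaying space is part of the definition, which is exactly what excludes the trivial $M=N$ identity-map "storage"), since it is this strict positivity — rather than $\ge 0$ — that produces $1/2$ rather than $1$.
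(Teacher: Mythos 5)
Your proof is correct and follows essentially the same route as the paper: a dimension count giving $N=\sum_\mu(s_\mu+d_\mu)\ge 2M$ from the requirement that each pattern carries a non-empty decaying subspace, with saturation at $s_\mu=d_\mu=1$ (hence rank-1 patterns). You are somewhat more explicit than the paper about the upper-bound direction for arbitrary $s_\mu,d_\mu$ and about exhibiting a realizing channel (via the amplitude-damping example), but the underlying argument is the same.
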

\begin{proof}
    For each pattern, we enforce a non-empty decaying subspace. Furthermore, since we want to store the maximum number of patterns with finite basins of attraction, we set $d_\mu=1$ (the smallest non-vanishing decaying subspace with one element). At this point, a condition that allows one to exploit all the stable subspace for storing patterns consists of taking the minimum rank for each pattern, setting $s_{\mu} = 1$. This in turn implies $ \rho_\mu = \op{\psi_\mu}$. Collecting the above results, the number of stored patterns is $M = N^S = N^D $, and the Hilbert space dimension is $N = N^S + N^D$. Thus, by employing  \cref{eq:quantum_storage_capacity}, it is $\alpha_c^{Q, \perp} = (N/2)/N = 1/2$.
\end{proof}

The criticality of this bound can be understood as follows. Attempting to store more than $N/2$ orthogonal patterns would result in insufficient decaying states for each pattern, and the additional patterns would have no associated basin. Therefore, the extra memory becomes unreachable unless the initial state perfectly matches the desired pattern. Conversely, removing a pattern would create a spurious state in the stable subspace, i.e., a state that remains invariant under the dynamics but is not a memory. In this case, more information could be stored without increasing the dimension of the system.

Our formulation establishes a significant advantage in terms of storage capacity compared to classical associative memories. Going back to the network of $n$ qubits of Hilbert space dimension $2^n$, its critical capacity corresponds to $2^{n-1}$ patterns, which exponentially outperforms classical models. Through a general formulation, we derived (in the previous section) a CPTP map displaying this exponential storage capacity and also presenting finite basins of attraction. We notice that this map improves the capacity $2^{n/2}$ of other \glspl{qam} analysis with finite basins of attraction~\cite{lewenstein2021storage}. 

Beyond orthogonal patterns, we also construct a map [see in \cref{sec:nonrothogonal}] that enables storing an arbitrary number of non-orthogonal states in \glspl{dfs}. However, this does not imply an infinite storage capacity. The reason is that, by requiring a non-empty basin of attraction for each non-orthogonal pattern, we obtain a larger Hilbert space dimension only as a result of an increasing decaying part. On the contrary, in the orthogonal case, the addition of a pattern requires a larger dimension for both the stable and the decaying subspace. 

In the most general case (see \cref{sec:nonrothogonal}) we have $M_\perp$ ($M_\nperp$) orthogonal (non-orthogonal) patterns stored in a stable subspace of dimension $N^S$, which we can divide into $N_\perp^S + N_\nperp^S$. Note that $N_\nperp^S$ is independent of the number of non-orthogonal patterns, while $N_\perp^S$ grows at least linearly with $M_\perp$. Instead, the dimension of the decaying subspace $N^D$ grows for each memory (independent of its type), so that the storage capacity is
\begin{equation} \label{eq:qsc_orthnonorthogonal}
    \alpha^Q = \frac{M_\perp + M_\nperp}{N^S + N^D} = \frac{M_\perp + M_\nperp}{N_\perp^S(M_\perp) + N_\nperp^S + N^D(M)}\ .
\end{equation}
Then, taking into account \cref{th:sc_limit} and assuming $N^D \sim \cO(M)$, the critical storage capacity is
\begin{equation} \label{eq:qsc_orthnonorthogonal_approx}
     \alpha_c^Q \sim \frac{M_\perp + M_{\nperp}}{2 M_\perp + M_{\nperp}}\ .
\end{equation}
As an example, let us consider a stable subspace consisting of only one \gls{dfs}, $\cS = \cX$, with  $\dim \cS = N_\nperp^S = N^S$. The storage of $M=M_\nperp$ patterns requires a finite-dimensional decaying subspace that is at least $M$-dimensional. Since only the dimension of the decaying subspace needs to increase, the storage capacity is $M/(M + N^S)$ which converges to a critical storage capacity of $1$ in the limit of $M \to \infty$.

\subsection{Classical output} \label{sec:sc_class}

An \gls{am} can be used to perform classification tasks for which it is necessary to obtain information about the state at the end of the process. Hence, a measurement of the final quantum state must be performed to determine the pattern. As anticipated in \cref{sec:framework}, due to the phenomenology of measurements in quantum mechanics, one has to repeat the process with multiple copies of the initial state. The resulting statistics provide information on the similarity between the different patterns. This, in turn, allows one to identify which memory is most similar to the final state (as in the classical case), and to gain information on how close the input state was compared to other patterns. This situation is relevant when taking into account non-orthogonal quantum patterns [see \cref{sec:nonrothogonal}], which cannot be perfectly discriminated \cite{chefles2000quantum,bergou2010discrimination}. As a result, this occurrence impacts the storage capacity of a \gls{qam} when employed to retrieve classical information. 

We aim to assess the effect of measurement on the storage capacity for the case outlined above. To this end, let us consider the general case in which patterns are given by $M_\perp$ orthogonal quantum states $\{ \rho_\mu \}$, and $M_{\nperp}$ non-orthogonal quantum states $\{\rho_\ell \}$, with $\tr(\rho_\mu \rho_\ell) = 0\ \forall\mu,\ell$. For the first ones, there exists a projective measurement $\hat{P}_\mu^\perp$, such that $\tr \hat{P}_\mu^\perp \rho_\nu = \delta_{\mu\nu}$, perfectly discriminating the orthogonal states. Instead, for the non-orthogonal states, any measurement displays a finite error probability $P_{\mathrm{err}}$ \cite{helstrom1969quantum}. Although several techniques exist that minimize the discrimination error, these depend on the particular states to be distinguished \cite{rudolph2003unambiguous,pozza2015discrimination}. In general, we will assume that for the set of non-orthogonal patterns $\{\rho_\ell \}$, an optimal \gls{povm} can be found, with a given success probability for discriminating non-orthogonal states, $P_{\mathrm{succ}}^{\mathrm{opt}} = 1 - P_{\mathrm{err}}^{\mathrm{opt}} < 1$. Then, following \cref{eq:qsc_orthnonorthogonal_approx}, the storage capacity becomes
\begin{equation}
    \alpha_c^{QC} \sim \frac{M_\perp + P_{\mathrm{succ}}^{\mathrm{opt}} M_{\nperp}}{2 M_\perp + M_{\nperp}} \ .
\end{equation}
In the limit $M_{\nperp} \to \infty$, the success probability vanishes, as it is not possible to discriminate an infinite number of quantum states within a finite-dimensional state space. In addition, notice that  $\alpha_c^{QC} < \alpha_c^{Q}$, and for any sub-optimal measure we will have $\alpha^{QC} < \alpha_c^{QC}$. This result is consistent with the fact that, when dealing with non-orthogonal patterns,  the amount of information stored is smaller than in the orthogonal scenario, as a result of patterns being correlated. This occurrence is highlighted also in the classical scenarios when considering correlated patterns \cite{gardner1988capacity}, and it has been recently introduced in a continuous-variable system, where the memories are not necessarily orthogonal \cite{labay2022memory}.

\begin{example} \textit{Storage capacity with \gls{dfs}}.
    Consider an \gls{am} with patterns being $M$ \gls{gus}, $\ket{\psi_\mu} = U^{\mu-1} \ket{\psi}$, $\mu=1,...,M$, where $U$ is a unitary operator satisfying $U^M = \Id_\cS$ and the patterns are in general not orthogonal (with $M_{\nperp} =M$). Without loss of generality, the decaying subspace is set to contain $M$ orthogonal states $\{ \ket{\omega_\mu} \}_{\mu=1}^M$, each associated with the corresponding pattern $\ket{\psi_\mu}$. In \cref{sec:gus} we show that we can indeed construct the channel that realizes the association between states in the decaying subspace with the corresponding pattern. Therefore, following \cref{eq:qsc_orthnonorthogonal}, the storage capacity for such a map is $\alpha^Q = M / (N^S + M)$, where $N^S$ is the dimension of the stable subspace. When increasing the number of patterns, the dimension of the stable subspace remains constant. Thus, in the limit $M \to \infty$ we get a finite storage capacity, $\alpha_c^Q \to 1$. Instead, when including a measurement process, we need to account for the maximum success probability of discriminating \gls{gus}. This can be written as $P_{\mathrm{succ}}^{\mathrm{opt}} = N^S / M$, by using the so-called square-root measurement \cite{ban1997gus,eldar2001sqrm}. Therefore, the resulting storage capacity vanishes when increasing the number of patterns, i.e. it is $\alpha_c^{QC} = P_{\mathrm{succ}}^{\mathrm{opt}} \alpha^Q = N^S / (N^S + M) \to 0$ as $M \to \infty$.
\end{example}

\section{Symmetries enabling quantum associative memory} \label{sec:symmetries}

This section explores systems that can be used as quantum associative memories, providing both the patterns and a learning rule. In particular, we discuss how memories can be encoded in steady states of a given open system that exhibits symmetry.

First of all, to enforce the presence of quantum memories, we consider systems that admit multiple steady states, or, equivalently, multiple conserved quantities. This occurrence alone does not provide any insight into the size and shape of decaying spaces $\cD_{\mu}$, and thus on basins of attraction. A sufficient condition for the Hilbert space to separate in invariant subspaces, and related decaying ones is the additional appearance of certain types of symmetries, as we illustrate below. Before going ahead, we point out that there are some differences between how symmetries emerge in Lindbaldian open quantum systems and generic CPTP maps. As such, in the following, we will address both scenarios.

Let us start considering an open quantum system evolving in a Markovian fashion, as described by the \gls{gksl} equation \cite{breuer2002theory}
\begin{equation*}
    \dot{\rho} = \mathcal{L}\rho= -i[\hmt,\rho] + \sum_{\ell}F_{\ell}\rho F_{\ell}^{\dagger} - \frac{1}{2}\lbrace F_{\ell}^{\dagger}F_{\ell}, \rho \rbrace \ .
\end{equation*}
Here, the Liouvillian $\mathcal{L}$ is the generator of the quantum map, $\hmt$ is the Hamiltonian of the system, and $F_{\ell}$ are the so-called jump operators. The system admits multiple steady states if the eigenspace of eigenvalue zero of the Liouvillian, say $\mathcal{L}_{ss}$, is multi-dimensional. The maximum dimension of this eigenspace is $N^2$, where $N = \dim(\hilbert)$. Notably, while all steady states are elements of $\mathcal{L}_{ss}$, the contrary is in general not true \cite{albert2014symmetries}. Before proceeding further, it is convenient to recall that the evolution of a generic operator $O$ reads $\dot{O} = \mathcal{L}^{\dagger}(O) = i[\hmt,O] + \sum_{\ell}F_{\ell}^{\dagger}O F_{\ell} - \frac{1}{2}\lbrace F_{\ell}^{\dagger}F_{\ell}, O \rbrace$, where $\mathcal{L}^{\dagger}$ is the adjoint of the Liouvillian with respect to trace norm. 

In this scenario, if the system displays a strong symmetry,  we can make use of the steady states as quantum memories and, additionally, identify the basins of attraction. Indeed, in a strong symmetry, by definition, there exists a number, say $M$, of operators $J_{\mu}$ that commute with both the Hamiltonian $\hmt$ and the jump operators $F_{\ell}$. If the latter condition holds true, then $\lbrace J_{\mu} \rbrace_{\mu=1}^{M}$ is a set of conserved quantities, $\dot{J}_{\mu} = 0$ $\forall \mu$, and, equivalently, the stationary space $\mathcal{L}_{ss}$ is $M$-dimensional. Moreover, a strong symmetry also induces a weak one (where $J_{\mu}$  commute with $\mathcal{L}$) \cite{buca2012symmetries,albert2014symmetries}, thus allowing one to separate the Hilbert space in symmetry sectors, $\hilbert=\bigoplus_{\mu}\hilbert_{\mu}$. Notably, the evolution inside each $\mu$-th space is split from the others. 

If the $M$ conserved quantities are orthogonal projectors onto the subspace $\hilbert_{\mu}$, the stationary space $\mathcal{L}_{ss}$ does not contain coherences, and each of the subspaces $\hilbert_{\mu}$ hosts a stationary state, which plays the role of the memory. With respect to the notation employed in \cref{sec:construct_qam}, we can thus identify $\cS_{\mu}$, the support of the $\mu$-th stationary state, as well as its corresponding decaying subspace $\cD_{\mu}$ as the orthogonal complement of $\cS_{\mu}$ with respect to $\hilbert_\mu$. In the most general situation, the stationary space $\mathcal{L}_{ss}$ contains also steady state coherences \cite{albert2014symmetries}. In this case, the structure of the stationary state is more complex, as it can host \glspl{dfs} and noiseless subsystems, which have been introduced in \cref{sec:construct_qam}. 

It is worth remarking that there are cases where a set of $M$ conserved quantities, $J_{\mu}$, does not correspond to any symmetry \cite{albert2014symmetries}. These are referred to as dynamical symmetries. Here, both the emergence of the latter and the identification of the decaying space have to be carried out on a case-by-case basis.

Let now consider the case of a generic \gls{cptp} map $\Lambda$. At variance with the previous Markovian case, to derive a separation of the Hilbert space in symmetry sectors, we need to restrict to conserved orthogonal projectors. Indeed, on the one hand, this guarantees multiple steady states, and, on the other hand, it ensures the presence of a global symmetry of the map and of a Hilbert space decomposition. For a more detailed discussion on symmetries and CPTP maps see, e.g., \ccite{albert2019asymptotics}. To our purpose, it is sufficient the following:
\begin{proposition}
    Orthogonal projectors $J_{\mu}$ are conserved quantities, $\Lambda^{\dagger}(J_{\mu})  = J_{\mu}$, iff they commute with the Kraus operators $[J_{\mu}, K_{\alpha}] = 0$, $\forall \alpha, \mu$. 
\end{proposition}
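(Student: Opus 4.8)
The plan is to prove the biconditional by straightforward manipulation of the defining relations, treating the two directions separately. Recall that a projector $J_\mu$ is a conserved quantity precisely when $\Lambda^\dagger(J_\mu) = \sum_\alpha K_\alpha^\dagger J_\mu K_\alpha = J_\mu$, and that the CPTP normalization $\sum_\alpha K_\alpha^\dagger K_\alpha = \Id$ from \cref{eq:general_cptp_cond} is available throughout. The ``if'' direction is immediate: if $[J_\mu, K_\alpha] = 0$ for all $\alpha$, then $K_\alpha^\dagger J_\mu K_\alpha = K_\alpha^\dagger K_\alpha J_\mu$, and summing over $\alpha$ and using the normalization gives $\Lambda^\dagger(J_\mu) = J_\mu$.

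For the ``only if'' direction, the natural tool is a Cauchy–Schwarz / operator-convexity argument of the type used to characterize the fixed points of unital (or trace-preserving) channels. First I would observe that since $J_\mu$ is an orthogonal projector, $J_\mu = J_\mu^2 = J_\mu^\dagger J_\mu$, so conservation reads $\sum_\alpha K_\alpha^\dagger J_\mu K_\alpha = \sum_\alpha K_\alpha^\dagger J_\mu^\dagger J_\mu K_\alpha$. The key step is to apply the operator Cauchy–Schwarz inequality for completely positive maps (Kadison–Schwarz), $\Phi(A)^\dagger \Phi(A) \le \Phi(A^\dagger A)$ when $\Phi$ is unital, or rather its appropriate dual version, to deduce that the ``$\le$'' in a suitable inequality is saturated; saturation of Kadison–Schwarz forces a multiplicative relation. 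Concretely, I would consider the operator $X = J_\mu K_\alpha - K_\alpha J_\mu$ for each $\alpha$ and show that $\sum_\alpha X_\alpha^\dagger X_\alpha = 0$, which, since each term is positive semidefinite, forces $X_\alpha = 0$ for every $\alpha$. Expanding $\sum_\alpha (J_\mu K_\alpha - K_\alpha J_\mu)^\dagger (J_\mu K_\alpha - K_\alpha J_\mu)$ and using both $\sum_\alpha K_\alpha^\dagger K_\alpha = \Id$ and the conservation hypothesis $\sum_\alpha K_\alpha^\dagger J_\mu K_\alpha = J_\mu$ should collapse the cross terms; one also uses that, by applying the same hypothesis to the complementary projector $\Id - J_\mu$ (which is conserved because $\Lambda^\dagger$ is unital, $\Lambda^\dagger(\Id)=\Id$), one gets $\sum_\alpha K_\alpha^\dagger J_\mu K_\alpha J_\mu = \sum_\alpha J_\mu K_\alpha^\dagger J_\mu K_\alpha = \ldots$, the symmetric companion identities needed to kill all remaining terms.

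The main obstacle I anticipate is bookkeeping the cross terms carefully: the expansion of $\sum_\alpha \lVert J_\mu K_\alpha - K_\alpha J_\mu \rVert^2$ (in the Hilbert–Schmidt sense, i.e. $\sum_\alpha \tr[\cdots]$, or better kept as an operator identity) produces four groups of terms, and showing they cancel requires invoking the conservation of $J_\mu$ not just as $\Lambda^\dagger(J_\mu)=J_\mu$ but also the ``sandwiched'' consequences like $J_\mu \Lambda^\dagger(J_\mu) J_\mu = J_\mu$ and $(\Id-J_\mu)\Lambda^\dagger(J_\mu)(\Id-J_\mu) = 0$. The latter, combined with positivity of $\sum_\alpha (\Id-J_\mu)K_\alpha^\dagger J_\mu K_\alpha (\Id - J_\mu) \ge 0$, already forces $J_\mu K_\alpha (\Id - J_\mu) = 0$, i.e. $J_\mu K_\alpha = J_\mu K_\alpha J_\mu$; a symmetric argument with $J_\mu$ and $\Id - J_\mu$ exchanged (using $J_\mu \Lambda^\dagger(\Id - J_\mu) J_\mu = 0$) gives $(\Id - J_\mu) K_\alpha J_\mu = 0$, i.e. $K_\alpha J_\mu = J_\mu K_\alpha J_\mu$. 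Combining the two yields $J_\mu K_\alpha = K_\alpha J_\mu$ for every $\alpha$, which is the claim. So in fact the cleanest route is to establish these two one-sided vanishing statements from the positivity of the two off-diagonal blocks, rather than expanding a single commutator norm; I would present it that way. A citation to \ccite{albert2019asymptotics} or \ccite{wolf2012quantum} can be given for the underlying fact that saturation of positivity/Schwarz-type inequalities forces such block-triangular-to-block-diagonal reductions.
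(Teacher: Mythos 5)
Your proof is correct and rests on the same mechanism as the paper's: the forward direction is the identical one-line computation, and the converse uses the fixed-point condition together with positivity of a sum of positive-semidefinite operators to force each term to vanish — the paper packages this as the single identity $\sum_{\alpha}[J,K_{\alpha}]^{\dagger}[J,K_{\alpha}] = \Lambda^{\dagger}(J)-\Lambda^{\dagger}(J)J-J\Lambda^{\dagger}(J)+J\sum_\alpha K_\alpha^\dagger K_\alpha J = 0$, while you split it into the two off-diagonal block conditions $J K_\alpha(\Id-J)=0$ and $(\Id-J)K_\alpha J=0$ via conservation of $J$ and of $\Id-J$. This is only a cosmetic repackaging of the same argument (and your worry about cross-term bookkeeping in the commutator expansion is unfounded: the four terms collapse immediately once $\Lambda^\dagger(J)=J$ and the trace-preservation condition are inserted).
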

\begin{proof}
    To demonstrate the above result, we proceed similarly to the case of generic conserved quantities (see, e.g. Theorem 5(ii) in \ccite{Blume-Kohout2010pra}). For the sake of a lighter notation, we drop the index $\mu$ in the following. If $[J, K_{\alpha}]=0$ then $\Lambda^{\dagger}(J) = \sum_{\alpha}K_{\alpha}^{\dagger}J K_{\alpha} =  J(\sum_{\alpha}K_{\alpha}^{\dagger} K_{\alpha}) = J $. Conversely, let us assume $\Lambda^{\dagger}(J) = J$. It can be shown that the following relation holds,
    $$\sum_{\alpha} [J,K_{\alpha}]^{\dagger}[J,K_{\alpha}] = \Lambda^{\dagger}[J^{\dagger}J]-J^{\dagger}J \ .$$
    The right-hand side of the above expression vanishes, as $J^{\dagger}J=J$, $J$ being a conserved quantity by assumption; the left-hand side is a sum of positive semidefinite quantities, vanishing iff $[J, K_{\alpha}] =0$. 
   
\end{proof}

As a consequence of the above Proposition, the set $J_{\mu}$ forms an algebra of matrices, which in turn induces a block decomposition of the Hilbert space $\hilbert$. Each state $\rho \in \bh$ can be thus decomposed as a block matrix, the evolution inside each block being separate. It is worth stressing that the conservation of the projectors, (i.e. their commutation with the Kraus operators), also implies the invariance of the CPTP map under $J_{\mu}$, $J_{\mu}\Lambda(\rho)J_{\mu}^{\dagger} = \Lambda(J_{\mu}\rho J_{\mu}^{\dagger}) $, in analogy with a weak symmetry for a Markovian system. In the case $J_{\mu}$ being generic conserved quantities, the above results are restricted to the maximal invariant subspace $\cS$. It can indeed be shown that the Kraus operators $K_{\alpha}^{\mathrm{S}}$ commute with the components $P_S J_{\mu} P_S$, $P_S$ identifying the projector on $\cS$. Hence, the block structure identified through the algebra of matrix of $P_S J_{\mu}P_S$ pertains to the subspace $\cS$ only. As a consequence, the identification of the decaying subspace is not straightforward, in analogy with the case of a dynamical symmetry in the Markovian case. 

\section{Extension to metastable patterns} \label{sect:meta}

All the previous discussion is based on the use of memories encoded into steady states. Nonetheless, it has been shown recently the possibility of realizing a \gls{qam} in a metastable regime \cite{labay2022memory,labay2023squeezed}. This transient memory can speed up retrieval because the decay to the metastable patterns occurs on a shorter time scale than the steady state decay. Therefore, tasks that do not require long-term memory can benefit from this approach. In this section we address this more general case that will be also illustrated in \cref{sec:ex_dd_metastable}.

\begin{figure}
    \centering
    \includegraphics[width=\linewidth]{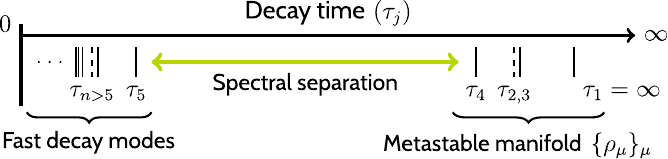}
    \caption{The spectrum of a Liouvillian superoperator with a metastable regime displays a large gap between two consecutive eigenvalues $\lambda_k$. Here we sketch the decay time $\tau_k^{-1} = -\Re[\lambda_k]$ of the eigenvalues. From left to right, we encounter the fast decaying modes with $\tau \ll 1$ separated by a large gap from the four modes defining the metastable manifold which includes the stable state corresponding to $\lambda_1 = 0$.}
    \label{fig:sketch_metastability}
\end{figure}

In general, the evolution of a state undergoing a \gls{gksl} equation can be decomposed as
\begin{equation} \label{eq:mm_general_decomposition}
    \rho(t) = e^{t \lv} \rho_0 = \rho_{\mathrm{ss}}+ \sum_{k \ge 2} c_k e^{t \lambda_k} R_k \ ,
\end{equation}
where $\lambda_k$ are the Liouvillan eigenvalues, sorted for convenience as $0 \le {\Re}[\lambda_k] \le {\Re}[\lambda_{k+1}]$, where $\lambda_1=0$ are the eigenvalues of the stationary state(s) and $R_k$ are the corresponding right eigenmatrices. The coefficients $c_k$ depend on the initial state $\rho(0)$ as $c_k = \tr[L_k^\dagger \rho(t=0)]$ where $L_k$ are the left eigenmatrices. In open quantum systems, metastability is typically observed as a consequence of a gap between two successive eigenvalues of the Liouvillian, say $\lambda_{n}$ and $\lambda_{n+1}$ [see \cref{fig:sketch_metastability}]. After $t > \tau_{n+1}$, all modes with $k > n+1$ give exponentially small contributions to the evolution and can be neglected. Here the system enters the metastable regime, where the evolution of a generic state $\rho$ can be decomposed as
\begin{equation} \label{eq:mm_projector}
    \rho(t) \approx \rho_{\mathrm{ss}} + \sum_{k=2}^n c_k(t) R_k = \cP_{MM}[\rho_0] \ ,
\end{equation}
where $\cP_{MM}$ is a projector superoperator on the first $n$ modes characterizing the metastable manifold. Such a manifold can induce a \gls{dfs} or a noiseless subsystem structure \cite{macieszczak2016towards}, but we will focus here on the so-called \textit{classical metastability}, where each state can be expressed as a convex combination of $n$ disjoint metastable phases $\{ \rho_\mu \}$~\cite{macieszczak2021metastability},
\begin{equation}
    \rho(t) = \sum_{\mu=1}^n p_\mu(t) \rho_\mu  \ ,  
\end{equation}
with $p_\mu(t) \ge 0$ the probability that the state is in the $\mu$-th phase. The metastable phases correspond to physical states that do not evolve in time. The dynamics is fully encoded in the evolution of the probabilities $\{ p_\mu(t) \}$. Moreover, the theory allows us to define a set of projectors $\{ P_\mu \}$ satisfying $\tr P_\mu \rho_\nu = \delta_{\mu\nu}$ and $\sum_\mu P_\mu = \Id$, which define the basin of attraction of each metastable phase. In fact, the value $p_\mu = \tr P_\mu \sigma$ for a general state $\sigma\in \bh$ is preserved until the final decay to the steady state for $t > \tau_\nu$. In analogy to the steady state scenario (see \cref{sec:symmetries}), the metastable phases $\rho_\mu$ act as invariant states for the projective map $\cP_{MM}$ and the projectors $P_\mu$ are the associated conserved quantities. 

Therefore, any quantum system displaying classical metastable dynamics can be understood as an associative memory where (C1) the patterns are the disjoint metastable phases $\{ \rho_\mu \}$, invariant under the map $\cP_{MM}$; (C2) there exists a subspace
\begin{equation}
    \cD_\mu = \{ \ket{\omega} \in \hilbert \mid \ev{P_\mu}{\omega} \ge 1/2 \} \ ,
\end{equation}
containing all the initial conditions evolving under the corresponding pattern with the highest probability; (C3) the subspaces $\cD_\mu$ are approximately orthogonal \footnote{Classical corrections may apply if the metastable phases are not sufficiently orthogonal.}.

The case of classical metastability falls into the category of \gls{qam} with orthogonal patterns, where the number of patterns depends on the dimension of the metastable manifold (the number of slow-decaying Liouvillian modes). Then, the appearance of a gap between the eigenvalues $\lambda_{n}$ and $\lambda_{n+1}$ allows one to store $n$ patterns. The difference between the real parts of the eigenvalues determines the length of the metastable transient. The decay time to the metastable manifold is $\tau_{s} = -[\Re \lambda_{n+1}]^{-1}$, and $\tau_f = -[\Re \lambda_{n}]^{-1}$ determines the end of the metastable regime. Since the dynamics is frozen in between, a measurement can be taken at $\tau_s$ to determine the pattern. The longer we wait to measure, the more likely we are to get a wrong result since jumps between patterns occur at a rate of $1 / \tau_f$. Of course, in the long time limit, all information is lost as the state decays to the final steady state.

\section{Examples} \label{sec:examples}

In this section, we will analyze three examples of \gls{qam} based on the theoretical framework constructed in previous sections. 

\subsection{Quantum random walk} \label{sec:ex_walk}

An interesting approach to \gls{qam} was taken in \ccite{petruccione2014walks}, where the proposed system is a dissipative quantum walk that converges to the predefined patterns in the long time limit. Here, as in the classical \gls{hnn}, the patterns $\{\vx^\mu \}_{\mu=1}^{M_\perp}$ are represented by strings of $n$-bits, $\vx^\mu = (x_1^{\mu},x_2^{\mu},\dots, x_n^{\mu})$ with $x_j^\mu \in \{ 0, 1 \}$, which are encoded in $n$ two-dimensional quantum systems $\hilbert_2$ such that $\vx^\mu \to \ket{x^\mu} = \bigotimes_{i=1}^{n}\ket{x_i^{\mu}}$. Notice that, while the corresponding classical patterns are in general non-orthogonal, this (basis) encoding leads to orthogonal patterns, i.e., quantum states  $\{ \ket{x^\mu} \}$ satisfying $\braket{x^\mu}{x^\nu} = \delta_{\mu\nu}$, so $M = M_\perp$. Thus, the stable subspace is spanned by the patterns, i.e. $\cS = \sspan\{ \ket{x^\mu} \}$, and any other state in $\hilbert_2^{\otimes n}$ belongs to the decaying subspace.

\begin{figure}
    \centering
    \includegraphics[width=\linewidth]{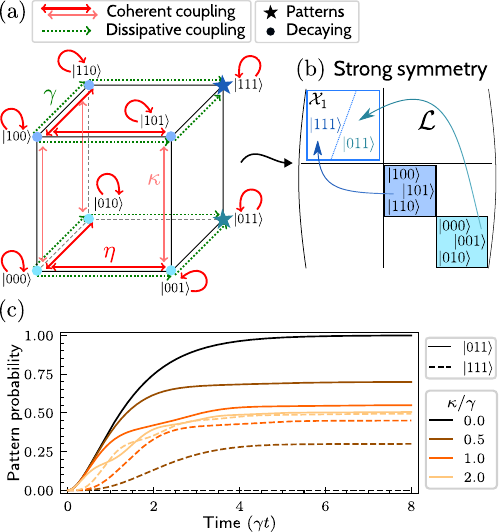}
    \caption{Dissipative quantum walk for \gls{qam}. (a) The nodes of the cube represent the possible states of the walker, and edges are connected between pairs of nodes if their Hamming distance is 1. The coherent dynamics couple states with a strength $\eta$ if they belong to the same basin, and $\kappa$ otherwise. The dissipative part drives the states towards the corresponding fixed points with a rate $\gamma$. (b) Strong symmetry induced in the Liouvillian compatible with \cref{fig:hilber_structure} for $\kappa = 0$. (c) Exact evolution of the initial state $\ket{000}$ for different coupling values between the basins and fixed $\eta/\gamma = 0.1$. The y-axis shows the overlap of the evolving state with the patterns: $\ket{011}$ (solid lines) and $\ket{111}$ (dashed lines). The association to the pattern $\ket{011}$ is perfect only for vanishing $\kappa$ (black lines). For $\kappa > 0$ the symmetry is lost and the probability of retrieving the second pattern increases until $\kappa > \gamma$ where both patterns become equally probable.}
    \label{fig:walk}
\end{figure}

The dynamics can be visualized as a dissipative quantum walk on an $n$-dimensional hypercube [see \cref{fig:walk}(a)], where the nodes are all possible vectors $\vec{\omega} = (\omega_1,\omega_2,\dots,\omega_n) \in \{ 0, 1\}^n$, and edges between two nodes exist if and only if their Hamming distance $d_H(\vec{\omega}, \vec{\omega}') = \sum_{i=1}^n \abs{\omega_i - \omega_i'}/2$ is 1, i.e. they differ only by one bit. Using this definition, we can identify the decaying subspace of the $\mu$-th pattern as the collection of states that are closer to it in terms of Hamming distance than to any other pattern $x^\nu$,  
\begin{equation}
    \cD_\mu = \inbrc{ \ket{\omega} \in \cS^\perp \mid \argmin_{\nu} d_H(\vec{\omega}, \vx^\nu) = \mu }\ .
\end{equation}
Then, the dissipative quantum walk must associate a state $\ket{\omega} \in \cD_\mu$ with $\ket{x^\mu}$. This is engineered by ensuring that (C1) the patterns are invariant states of the Liouvillian superoperator, and (C2) the jump operators, denoted by $L_{\omega\to \omega'}$, couple only states that reduce the Hamming distance towards a pattern,
\begin{equation}
    L_{\omega\to \omega'} = \op{\omega'}{\omega} \text{ if } \min_\mu d_H(\vec{\omega'}, \vx^\mu) < \min_\mu d_H(\vec{\omega}, \vx^\mu) \ .
\end{equation}
A jump between two states happens at a rate $\gamma$, so the walker needs a time $d_H(\vec{\omega},\vx^\mu) / \gamma$ to reach the fixed point. 

The dynamics is completed by a Hamiltonian coupling all nodes at distance 1 except the patterns \cite{petruccione2014walks}. Hence, the Hamiltonian matrix reads
\begin{equation}
    \mel{\omega'}{\hmt}{\omega} = \begin{cases}
        \eta & \ket{\omega},\ket{\omega'} \in \cD_\mu  \\
        \kappa & \ket{\omega} \in \cD_{\mu}\text{ and } \ket{\omega'} \in \cD_{\nu}\ \mu\neq\nu \\
    \end{cases}\ ,
\end{equation}
where nodes are coherently coupled with strength $\eta$ to themselves and to any other node in the same basin, and with strength $\kappa$ to nodes in different basins. The complete Liouvillian is thus $\lv(\rho) = -i [\hmt, \rho] + \sum_{\omega,\omega'} \gamma D[L_{\omega\to \omega'}] \rho$, where $D[O]\rho = O \rho O^{\dagger}-\frac{1}{2} \{ O^{\dagger} O, \rho \}$. The patterns are steady states by construction, and in fact, the stable subspace is a \gls{dfs} where the coherences between patterns are also preserved ($\lv(\op{x^\mu}{x^\nu}) = 0$). 

For instance, in the example of \cref{fig:walk}, there are two patterns $\vx^1 = (0, 1, 1)$ and $\vx^2 = (1, 1, 1)$ with associated decaying subspaces $\cD_1 = \sspan\{\ket{000}, \ket{001}, \ket{010} \}$ and $\cD_2 = \sspan \inbrc{\ket{100}, \ket{101}, \ket{110}}$ respectively. The collection of jump operators produces bit-flips in the first and second qubits to drive the walker towards the patterns, while the coherent Hamiltonian produces oscillations between ($\kappa$) and within ($\eta$) the basins.

Let us first focus on the case $\kappa = 0$. In panel (a) we can see that the cube separates into two disconnected regions containing the states in $\hilbert_1$ (bottom face) and $\hilbert_2$ (top face). Moreover, we can observe the presence of a strong symmetry $P = \sigma_z \otimes \Id_2 \otimes \Id_2$, which means that the overlap with the leftmost qubit is preserved during the evolution. That is, for any initial state $\sigma \in \hilbert$ it holds that $\tr [P \sigma(t = 0)] = \tr [P \sigma(t = \infty)]$, such that if $\sigma \in \hilbert_1$ ($\hilbert_2$) then $\lim_{t\to \infty} e^{\lv t}(\sigma) = \op{x^1}$ ($\op{x^2}$), thus satisfying the associativity condition \eqref{eq:def_assoc_cond}. The associative memory implemented by a quantum random walk with strong symmetry associates a pure state $\ket{\vs}$ with $\vs \in \{ 0,1\}^n$ to the closest pattern $\{ \vx^\mu \}$ in terms of its Hamming distance.  

If the coupling between the basins is $\kappa > 0$, then the strong symmetry is broken, and information can flow between them. As an example, in panel (c) we show the time evolution of the observables $P_{011} = \op{011}$ and $P_{111} = \op{111}$ for the initial state $\ket{000}$. Increasing the value of the coupling $\kappa$ decreases the final retrieval probability since the walker has a non-vanishing probability of going to the other basin. Only the case $\kappa = 0$ leads to perfect retrieval, as expected from the symmetry. Conversely, when the coherent coupling is larger than the dissipation rate ($\kappa /\gamma > 1$), the walker ends up in an equal mixture of both patterns. Therefore, this system is a valid associative memory only in the regime $\kappa/\gamma \sim 0$.

Furthermore, we can calculate the maximum storage capacity of this system. By construction, the model can only store classical-like patterns that are orthogonal. Then, as shown in \cref{sec:orthogonal}, the maximum storage capacity is $1 / 2$. This limit corresponds to storing half of the bit-strings ($2^{n-1}$) as patterns, while the other half belongs to the decaying subspace. For example we can choose as patterns all the states satisfying $\ket{x^\mu} = [\otimes_{i=2}^n \ket{x_i^\mu}] \ket{0}$ (even states), and we can associate them with the odd state $\ket{\omega^\mu} = X_1 \ket{x^\mu} = [\otimes_{i=2}^n \ket{x_i^\mu}] \ket{1}$. The conserved quantity of such a system is the collective spin of the last $(n-1)$ qubits $S = \sum_{i=2}^n \sigma_z^{i}$. 

This example proves the validity of our formulation to explain previous models that use symmetry to realize a \gls{qam}. This particular model allows the perfect association of classical bit strings and may therefore be useful for tasks involving classical data. This system has been realized experimentally using photonic chips \cite{china2019walks}, but other platforms such as ensembles of qubits may also be useful. In fact, it is possible to understand this model as an $n$-spin ensemble, where for the particular example in \cref{fig:walk}(a) the Liouvillian reads $\lv(\rho) = \sum_j \insqr{-i \frac{\omega}{2}\sigma_j^z + \kappa' \sigma_j^x, \rho } + \gamma D[\sigma_1^-]\rho + \gamma D[\sigma_2^-]\rho$ where $\sigma_j^- = \op{1}{0}$ is the decay operator on the $j$-th qubit and $\sigma_j^{x,z}$ are the Pauli matrices. This produces the same dynamics as the dissipative quantum walk and has the patterns as steady states when $\kappa' = 0$ (in fact, it is the example of the amplitude decaying channel on the first two qubits).

The dissipative quantum random walk for \gls{qam} can be applied as a quantum error correction protocol for bit-flip errors. For example, let us consider the three-qubit code, where one has the logical states $\ket{0_L} = \ket{000}$ and $\ket{1_L} = \ket{111}$ \cite{nielsen2010quantum}. Here, single-bit flip errors can be typically corrected by a majority voting mechanism. In the language of associative memory, this means that states that differ by a single bit from one of the two logical states are associated with that bit. Hence, the dissipative quantum walk can be considered as an autonomous quantum error correction for bit-flip errors. Indeed, symmetry protects the system from bit flips by associating errors within the error space with the corrected state. Moreover, since the stable subspace of the system is a \gls{dfs}, quantum coherences are preserved and it allows the storage of a qubit.

\subsection{Driven-dissipative resonator} \label{sec:ex_dd_oscillator}

Driven-dissipative resonators are well-studied systems displaying rich dynamical phenomena like metastability and dissipative phase transitions \cite{minganti2023dissipative} and have been recently employed for quantum memories \cite{mirrahimi2014universal}. In particular, these oscillators have been shown useful for \gls{qam} in a metastable regime leading to improved storage capacities \cite{labay2022memory} and allowing the storage of genuine quantum states \cite{labay2023squeezed}. In this section, we will review the most important characteristics of the metastable \gls{qam} and show the possibility of having permanent memories. Thus highlighting the versatility of these systems for storing different types of quantum patterns.

The oscillator is described by a \gls{gksl} equation
\begin{equation} \label{eq:pp_me_nm}
    \pdv{\rho}{t} = -i [\hmt_n, \rho] + \gamma_1 \disp{\opa}\rho + \gamma_n \disp{\opa^n} \rho\ ,
\end{equation}
where we have standard terms for linear (single-photon) and nonlinear (multiphoton) damping \cite{mundhada2017four,gevorkyan1999coherent} with rates $\gamma_1$ and $\gamma_n$ respectively. The Hamiltonian, which contains a $n$-order squeezing drive \cite{braunstein1987generalized,lang2021multi,minganti2023dissipative}, in the rotation frame and after the parametric approximation is
\begin{equation} \label{eq:pp_ham_nm}
    \hmt_n = \Delta \opad \opa + i \eta \left[\opa^n e^{i \theta_0 n} - (\opad)^n e^{-i \theta_0 n} \right]\ .
\end{equation}
Here, $\Delta = \omega_0 - \omega_s$ is the detuning between the natural oscillator frequency and that of the squeezing force, $\eta$, and $\theta$ the magnitude and phase of the driving, respectively. We observe that the model possesses $\mathds{Z}_n$ symmetry, that is, the transformation $\opa \to \opa \exp(i2\pi/n)$ leaves the master equation invariant \cite{minganti2023dissipative}. The interplay between driving and dissipation, together with the rotation symmetry of the system, leads to the generation of $n$ symmetrically distributed coherent states $\{ \ket{\alpha_j } \}_{j=1}^n$ in the steady state, where $\alpha_j = r \exp(i \theta_j)$ with $r^n = 2\eta / \gamma$ and $\theta_j = 2 \pi j /n + \theta_0$. 

In the following parts, we will study the use of the driven-dissipative oscillator for \gls{qam} in two regimes: the weak symmetry one, where the coherent states are metastable, and the strong symmetry one, where the coherent states form $n$-cat states that are steady states of the dynamics. This allows us to illustrate metastable and stable patterns encoding for \gls{qam}, in terms of coherent states and cat states,  respectively.

\begin{figure*}
    \centering
    \includegraphics[width=\linewidth]{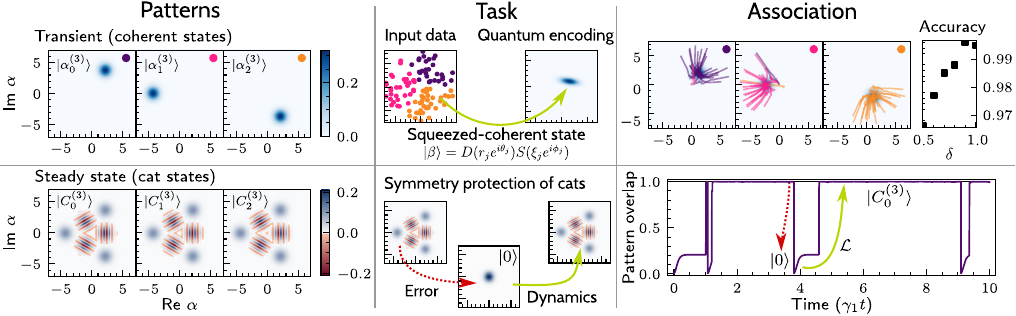}
    \caption{Driven-dissipative resonator for QAM. Parameters: $n=m=3$, $\gamma_3 = 0.2$, $\eta=1.56$, $\Delta=0.4$ and $\theta_0=0$. \textbf{Upper}: Weak symmetry regime ($\gamma_1=1$). (Left) Patterns are defined by the metastable phases: $n$ symmetrically distributed coherent states. (center) Initial data points are encoded in quantum states such as squeezed states, each of which should be associated with the nearest pattern identified by the color (violet $\to \alpha_0$, magenta $\to \alpha_1$ and orange $\to \alpha_2$ ). The basins of attraction [\cref{eq:def_basin_lobes}] that identify the correct pattern are defined with $\delta = 0.5$. (right) The evolution of the state drives the system towards one of the patterns as soon as the system enters the metastable transient. We see the result of the classification of 100 states, where each line is a single trajectory for each initial state in the left part. The upper-right colored dots represent the lobe that the state should be associated with while the trajectory color indicates the measured lobe after association. An accuracy of 0.96 is obtained for $\delta=0.5$ but increasing the spacing between basins leads to higher accuracy as expected. \textbf{Lower}: Strong symmetry regime ($\gamma_1=0$). (Left) Patterns are defined by the steady states of the system: $n$-cat states. (center) The task of the \gls{qam} is to correct amplitude damping errors which bring the system to the empty state $\ket{0}$. (right) Strong symmetry preserves the parity of the original states, so the states converge to the cat state of the corresponding parity.}
    \label{fig:lobes_encoding}
\end{figure*}

\subsubsection{Metastable encoding} \label{sec:ex_dd_metastable}
    In the presence of linear dissipation ($\gamma_1 > 0$), the oscillator has a metastable phase that separates the slowest decaying $n$ Liouvillian modes (including the steady state) from the rest. These $n$ modes define $n$ metastable phases $\{ \rho_\mu \}_{\mu=1}^n$ corresponding to the $n$ lobes forming the steady state, so $\rho_\mu = \op{\alpha_\mu}$. In the upper left part of \cref{fig:lobes_encoding} we see an example for $n=3$, where three symmetrically distributed coherent states define the quantum patterns.
    
    We can also identify a basin of attraction that divides the phase space into $n$ regions. The projector on each of these basins can be written as
    \begin{equation} \label{eq:ex_dd_projector}
        P_\mu = \frac{1}{\pi} \int_{\theta_j - \pi/n}^{\theta_{j} + \pi/n} d\varphi \int_0^\infty dR\ R \op{Re^{i\varphi}} \ .
    \end{equation}
    So all coherent states whose phase is in the interval $[\theta_j - \pi/n, \theta_j + \pi/n]$ will be classified as being in the $j$-th basin. Hence, we may identify the decaying subspaces as
    \begin{equation} \label{eq:def_basin_lobes}
        \cD_\mu = \inbrc{ \ket{\beta} \in \hilbert \mid \tr P_\mu \op{\beta} > \delta \ }\ ,
    \end{equation}
    where $\delta \ge 1/2$. However, since coherent states form an overcomplete basis, the decaying subspaces are disjoint but only approximately orthogonal, as states lying close to the boundary between two basins have a similar probability of converging to both lobes. To prevent this, we may choose a larger value of $\delta$ that would correspond to less distorted patterns.

    In \cref{fig:lobes_encoding}, an oscillator with $n = m = 3$ is used to restore coherent states. The initial states can be obtained from classical data encoded in squeezed coherent states, or directly from quantum inputs. For example, discrete modulated continuous-variable quantum key distribution protocols use symmetric distributed coherent states to encode the keys \cite{Xuan2009cvqkd,norbert2019cvqkd}. The states typically suffer from noise during transmission, which distorts the states. The task is to use \gls{qam} to recover the original states, characterized by the colors of the points in the middle panel. For example, the initial points are randomly chosen coherent states satisfying that the overlap with a basin is at least $\delta = 0.7$ [see \cref{eq:def_basin_lobes}].

    For the reconstruction task, we inject 100 coherent states and compute a single trajectory for each of them. At the start of the metastable transient an unambiguous measurement is performed as explained in \cite{labay2022memory}. The state is associated with a lobe if the measurement triggers the corresponding projector $\Pi_\mu = \op{\alpha_\mu}$ \footnote{The \gls{povm} is completed by the unambiguous operator $\Pi_? = \Id - \sum_\mu \Pi_\mu$ which triggers with a small probability if the lobes are approximately orthogonal, in such case the state is unclassified.}. The trajectory for each lobe is shown in the left panel of \cref{fig:lobes_encoding}. States correctly classified match the color of the trajectory with the plot. In this example, all states are classified, and 96\% of them are reconstructed correctly. The errors are produced by random jumps between lobes and between a lobe and the steady state. The longer we wait to measure inside the metastable regime, the more likely is that a jump will occur. Increasing the value of $\delta \ge 0.8$ leads to a perfect retrieval accuracy, as the basins are more orthogonal, although already with $\delta = 1/2$ we can achieve accuracies above 95\%.

\subsubsection{Cat state encoding}

    When the linear dissipation is turned off ($\gamma_1 = 0$), the system displays a strong $\mathds{Z}_n$ symmetry which divides the Hilbert space into $n$ symmetry sectors. In each sector there exists a steady state  $\rho_\mu$ ($\mu=1,...,n$) and we can identify $n$ conserved quantities $\{ P_\mu \}$ such that $\tr P_\nu \rho_\mu = \delta_{\mu\nu}$. These states for a driven-dissipative nonlinear oscillator correspond to multimode $n$-cat states 
    \begin{equation} \label{eq:multimode_catstates}
        \ket{C_\mu^{(n)}} = \frac{1}{\sqrt{n}} \sum_{k=0}^{n-1} e^{i 2\pi \mu k / n} \ket{\alpha_k}\quad j=0,\dots,n-1 \ ,
    \end{equation}
    where the coherent states $\ket{\alpha_k}$ are the same as in the previous case. For $n = 2$, this reduces to the even and odd cat states $\ket*{C_\pm^{(2)}} = (\ket{+\alpha} \pm \ket{-\alpha})/\sqrt{2}$, and the conserved quantities are the projectors onto the even and odd parity sectors, respectively. In general, the $\mu$-th cat state contains only Fock states, $\ket{n a + \mu}$ where $a \in \mathds{N}$, and the conserved quantities can be expressed as $P_\mu = \sum_a \op{n a + \mu}$. Treating each cat state as a pattern, the strong symmetry enables a \gls{qam} since any state belonging to a symmetry sector will be associated with the corresponding cat state. Even though the Hilbert space is infinite-dimensional, we can also define a stable and decaying subspace. The stable subspace, depending on the system parameters \cite{lieu2020symmetry,minganti2023dissipative}, is spanned either only by the cat states, forming $n$ irreducible subspaces $\{ \cS_\mu \}_{\mu=0}^{n-1}$, or also by their coherences $\op*{C_\mu^{(n)}}{C_\nu^{(n)}}$, forming a three-dimensional \gls{dfs} $\cX$. The latter would allow the storage not only of the multimode cat states but also of any linear superposition of lobes such as a cat state between two lobes, i.e. $\ket{\alpha_0} \pm \ket{\alpha_1}$. Conversely, the decaying subspace associated with each cat state can be constructed through the Gram-Schmidt orthogonalization process such that $\cD_\mu = \{ \ket{\omega} \in \hilbert \mid \ev{P_\mu}{\omega} = 1\text{ and } \braket*{C_\mu^{(n)}}{\omega} = 0 \}$.
    
    This situation resembles the example of the dissipative random walk, where the block structure of the Liovillian generated by a strong symmetry allows for perfect association between decaying and stable states. An example is shown in the lower part of \cref{fig:lobes_encoding} for $n=m=3$. There are three cat states with parity eigenvalue $0$, $2\pi/3$, and $-2\pi/3$ that can be used as patterns. Such multicomponent cat states have been experimentally realized in superconducting platforms \cite{vlastakis2013deterministically} with applications to quantum error correction \cite{peter2016qecmulticat}. 
    
    In the lower part of \cref{fig:lobes_encoding} we perform a particular \gls{qam} task where the patterns are the three-mode cat states [\cref{eq:multimode_catstates} with $n= 3$]. Here, the state may undergo an amplitude damping channel where, with some probability, the state is reset to the ground state $\ket{0}$. Then, the \gls{qam} restores the state. Indeed, since the state is symmetry-protected, the cat state is recovered after a short time, due to its dynamics. An example of such a trajectory is seen in the bottom right panel of \cref{fig:lobes_encoding}, where we plot the overlap with the cat state $\ket*{C_0^{(3)}}$. As soon as the error occurs, the state is associated back to the cat state.

    Note that, measuring the parity of the state is not enough to determine if the evolution has converged to the pattern. Indeed, in this example, parity is always conserved but the state jumps from the vacuum state to the cat state. Hence, in general, the symmetry guarantees that a state will be associated with the steady state with the same symmetry, but we cannot use the respective conserved quantity to determine if the association has happened or not. For that, we must resolve to an operator that can identify if the state has evolved into the pattern or not.

\subsection{Geometrically uniform states} \label{sec:gus}

In this section, we extend the example introduced in \cref{sec:sc_class} to model an \gls{am} that stores many non-orthogonal patterns in a two-dimensional \gls{dfs}. Consider a system of $n + 1$ qubits, where the first one represents the unique stable subspace $\cS = \hilbert_2$ ($\dim \cS = 2$), and the remaining $n$ qubits define the decaying subspace $\cD = \hilbert_2^{\otimes n}$ ($\dim \cD = 2^n$). The patterns stored in $\cS$ correspond to a number $ M_{\nperp} \equiv M$ of \gls{gus}, $\ket{\psi_\ell} = U^{\ell-1} \ket{\psi}$, $\ell = 1,...,M$, where $U$ is a unitary operator, $U = \exp(-i 2 \pi \sigma_y / M)$, so that $U^M = \Id_2$. Then the stable subspace consists of only one \gls{dfs} $\cS = \cX_1$ (with respect to the notation introduced in \cref{sec:nonrothogonal}, $M_\perp = 0$ and $C_{\nperp} = 1$) spanned by the states $\{ \ket{0} \equiv \ket{1_0}, \ket{1} \equiv \ket{1_1} \}$. As explained in \cref{sec:nonrothogonal}, for each pattern there must exist a decaying subspace consisting of all states associated with the particular pattern. Since there is only one \gls{dfs} in the stable subspace, we can decompose the decaying subset as $\cD = \bigoplus_{\ell=1}^{M} \cD_\ell$, where we have omitted the index $\tau=1$, which appears in \cref{eq:structure_decaying_non-orthogonal}, for the sake of easier notation. 

Let us continue with the construction of the decaying subspace $\cD_\ell$ for each pattern. First, a basis for the decaying subspace $\cD$ can be taken as $\lbrace \ket{\omega_x} = \ket{\omega_{x_0}}\ket{\omega_{x_1}} \cdots \ket{\omega_{x_{n-1}}} \rbrace_{x=0}^{2^{n-1}}$, where we denote the elements of such a basis as $x = \sum_{t=0}^{n-1} x_t 2^t$, and the term $\ket{\omega_{x_t}}$ with $x_{t} \in \{0, 1\}$ represents the $t$-th qubit basis and $t=0,...,n-1$. We thus define the $\ell$-th decaying subspace according to
\begin{equation}
    \cD_\ell = \mathrm{span}\{ \ket{x} \in \hilbert_2^{\otimes n} \mid x\mod M = \ell \}\ , 
\end{equation}
so that the decaying states $\ket{\omega_x}$ are associated with the $\ell$-th rotated pattern if the label $x$ modulus $M$ is exactly $\ell$. As an example, if we consider a three-qubit decaying space, $n=3$ and $M=2$ patterns, the two decaying subspaces are $\cD_1 = \mathrm{span}\{ \ket{1}, \ket{3}, \ket{5}, \ket{7}\}$ and $\cD_2 = \mathrm{span}\{ \ket{0}, \ket{2}, \ket{4}, \ket{6}\}$, corresponding to the decomposition $\hilbert_{2}^{\otimes 3} = \hilbert_{2}^{\otimes 2} \bigoplus\hilbert_{2}^{\otimes 2} $. Note, however, that if we consider $M=3$, then we will build decaying subspaces with different dimensions (and this is the case for $M$ odd). Indeed, it is $\cD_1 = \mathrm{span}\{ \ket{1}, \ket{4}, \ket{7}\}$, $\cD_2 =  \mathrm{span}\{ \ket{2}, \ket{5}  \}$, and $  \cD_3 =  \mathrm{span}\{ \ket{0}, \ket{3}, \ket{6}  \}$, so that $\hilbert_{2}^{\otimes 3} = \hilbert_{3} \bigoplus \hilbert_{2}\bigoplus \hilbert_{3} $.

\begin{figure}
    \centering
    \includegraphics{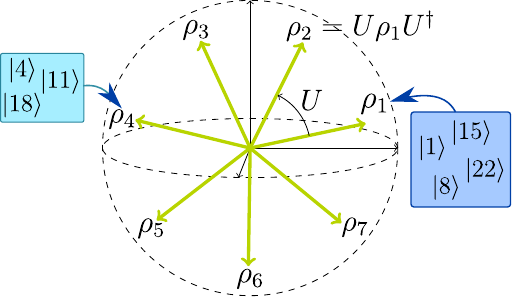}
    \caption{Bloch-sphere representation of the stable subspace characterized by a two-dimensional \gls{dfs} in which seven \gls{gus} defined by the unitary $U$ are stored. For each pattern $\rho_\mu$ there is a decaying subspace formed by states in $\hilbert_2^{\otimes n}$ whose decimal representation module $M$ is equal to $\mu$.}
    \label{fig:gus}
\end{figure} 

The expression for the Kraus operators follows the one given in \cref{eq:kraus_gen_s,eq:kraus_gen_d,eq:kraus_gen_sd}, where the part acting on the stable subspace is proportional to the identity, the part acting on the $\ell$-th decaying subspace reads
\begin{equation}
    K_{\alpha,\ell}^D \equiv K_{\alpha,1,\ell}^D = \sum_{x=0}^{d_\ell -1} c_{\ell, x}^\alpha \op{\omega_{x M + \ell}^\ell} \ ,
\end{equation}
and the part mixing stable and decaying is
\begin{equation}
    K_{\alpha}^{SD} \equiv K_{\alpha,1}^{SD} = \sum_{\ell=1}^M \sum_{j=0}^1 \sum_{x=0}^{d_\ell-1} b_{\ell, j, x}^\alpha \op{j}{\omega_{x M + \ell}^\ell}\ .
\end{equation}
Reminding that we aim at storing \gls{gus} as patterns, i.e. the states
\begin{equation}
    \ket{\psi_\ell} = U^{\ell-1}\ket{\psi} =  \sum_{j=0}^1 \psi_j e^{-i \pi j \ell / M } \ket{j}\ ,
\end{equation}
the associativity condition as formulated in \cref{eq:kraus_gen_assoc_params} reads
\begin{equation}
     \sum_\alpha  (b_{\ell, k, y}^\alpha)^* b_{\ell',j,x}^\alpha  = \delta_{\ell \ell'} \kappa_{xy}^\ell \bra{j}U^{\ell-1} \op{\psi} (U^{\ell'-1})^\dagger\ket{k} \ ,
\end{equation}
which can be expressed as
\begin{equation}
    \sum_\alpha  (b_{\ell, k, y}^\alpha)^* b_{\ell,j,x}^\alpha  = \kappa_{xy}^\ell  \psi_j \psi_k^* e^{-i\pi\ell (j-k)/M} \ .
\end{equation}
By employing the results in \cref{apx:derivation_dfs}, the parameters $\kappa_{xy}^{\ell}$, can be further written as $\kappa_{xy}^{\ell} = \delta_{xy} (\kappa_{x}^{\ell})^2$, where $\kappa_x^\ell$ is given in terms of the decaying parameters $c_{\ell,x}^\alpha$ as in \cref{eq:kraus_assoc_rate_nonorth}. As such, the mixing parameters $b_{\ell, j, x}^{\alpha}$ need to satisfy
\begin{equation}
     \sum_\alpha  (b_{\ell, k, x}^\alpha)^* b_{\ell,j,x}^\alpha  = (\kappa_{x}^\ell)^2  \psi_j \psi_k^* e^{-i\pi\ell (j-k)/M}
     \ .
\end{equation}
Then, a simple solution of the above equation can be derived by associating a single Kraus operator to each collection of non-vanishing mixing terms, $ \{ b_{\ell, j, x}, \forall j \}$, which is uniquely identified by the pair $(\ell, x)$. In this case, the mixing parameters need to satisfy 
\begin{equation}
    b_{\ell,j,x}^\alpha = \psi_j \kappa^\ell_x \exp[i \frac{2\pi}{M}j \ell] \delta_{\alpha-2,xM + \ell} \ .
\end{equation}
First of all, notice that the CPTP condition in \cref{eq:kraus_nonorth_cptp_cond_3} is automatically satisfied with this choice. Secondly, we further took into account that there are two Kraus operators acting only on the stable subspace as $K_\alpha^{\mathrm{S}} = a_\alpha \Id_S$ for $\alpha = 1, 2$ (and $K_\alpha^{\mathrm{S}} = 0$ for $\alpha > 2$) [see \cref{eq:kraus_orth_cptp_cond_2}]. Hence, the final form of the map is
\begin{eqs}
    K_1 &= \insqr{a_1 \Id_S} \oplus \insqr{ \sum_{\ell=0}^{M-1} \sum_{x=0}^{d_\ell-1} \sqrt{1 - \abs{\kappa_x^\ell}^2} \op{\omega_{xM + \ell}}} , \\
    K_2 &= \insqr{a_2 \Id_S} \oplus \mathbb{0} \ , \\
    K_\alpha &= \sum_{j=0}^{1} \psi_j \kappa_x^\ell \exp[i \frac{2\pi j}{M}\ell]\op{j}{\omega_{xM + \ell}} \delta_{\alpha-2, xM+\ell} \ , \quad \nonumber\\&\quad \alpha=3,\dots,2^n+2 \ .
\end{eqs}
The previous map performs perfect association between the decaying states in each basin ($\cD_\ell$) and the corresponding pattern ($\ket{\psi_\ell}$).

Now, we can calculate the storage capacity of the model using \cref{eq:qsc_orthnonorthogonal}. Here, the dimension of the stable subspace is $N^S = 2$, and the dimension of the decaying subspace is $N^D = 2^n$. Then,
\begin{equation} \label{eq:gus_sc}
    \alpha_c^Q = \frac{M}{2 + 2^n} \ ,
\end{equation}
where we note that $M$ is only bounded by $N^D$, as there must be at least a decaying state for each pattern. Thus, in the limit $M \sim 2^n$ we arrive at $\alpha_c^Q \approx 1$.

However, the classical storage capacity is smaller because we have to retrieve the information, that is, we need to discriminate the patterns. The optimal measurement strategy to maximize the success probability of discriminating \gls{gus} is the square-root measurement $P_{\text{succ}} = \abs{\ev{\Phi^{-1/2}}{\psi}}^2$, where $\Phi = \sum_k \op{\psi_k}$ \cite{ban1997gus}. For \gls{gus}, we have $\Phi = (M/N^S) \Id_S$ so $\Phi^{-1/2} = \sqrt{N^S/M}\Id_S$. and $P_{\text{succ}} = N^S / M$. Then, the classical storage capacity is
\begin{equation}
    \alpha_c^{QC} = \frac{N^S}{M} \frac{M}{N^S + N^D} = \frac{N^S}{N^S + N^D} = \frac{2}{2 + 2^n} \ ,
\end{equation}
which is the storage capacity one gets when storing two orthogonal patterns in a two-dimensional stable subspace $\alpha_c^{Q,\perp}$. In other words, due to the measurement, the maximum storage capacity is limited by the storage capacity of orthogonal patterns. But, of course, this storage capacity is much smaller than $\alpha_c^Q$ in \cref{eq:gus_sc}.

\section{Discussion}
\label{sec:disc}

The general framework for quantum associative memory elucidates their key features, underlying principles, and limitations. This unified approach also facilitates meaningful comparisons among diverse recent models. 
Initially, \gls{qam} was framed as a modified version of Grover's search algorithm \cite{ventura2000qam}. This formulation can be regarded as a pattern completion problem rather than an association between the initial state and the target \cite{lloyd2018qhnn}. Patterns are encoded as classical bit strings in $n$ qubits, and the algorithm searches the space of all possible patterns for those that are identical in the first $n-x$ qubits to the input. However, the algorithm is not able to restore imperfect preparation, differing therefore from a \gls{qam}. Moreover, because of its unitary nature, this approach departs from the original Hopfield formulation, as the patterns in question are not fixed points of the dynamics  \cite{ventura2000qam,lloyd2018qhnn}. Indeed,  Grover's algorithm requires an optimal number of iterations depending on the starting point and the number of patterns. A possible solution is to use the modified Grover search method where the target states are fixed points \cite{grover2005fp,tulsi2005new,chuang2014speedupfp}. Interestingly, even though some versions of this algorithm may store an exponential number of classical patterns \cite{ventura2000qam,trugenberger2001probabilistic,TrugenbergerComment,TrugenbergerReply}, it also introduces an exponential number of spurious memories. This is because any state in the Hilbert space has a non-zero probability of retrieval (decreasing with the dimension of the Hilbert space) so any state that is not a pattern is a spurious memory.

More in general, unitary-based \glspl{qam} cannot exhibit fixed points of the dynamics, with the exception of the trivial (identity) case. Hence, these unitary approaches need to estimate beforehand the optimal number of applications of the unitary gate (or the optimal time to evolve a Hamiltonian) to retrieve the desired pattern with the highest probability. For instance, \ccite{trugenberger2001probabilistic} reports that 80 repetitions of the algorithm are needed to retrieve the pattern with probability $1.4\cdot 10^{-4}$. Similarly, the proposals of \cite{cao2017quantum,miller2021quantum} use a repeat-until-success strategy which measures the correct pattern depending on the Hamming distance between the input and all the patterns. As for the capacity of these proposals, it is limited by that of the employed Hebbian learning rule.

Complementary to the quantum circuit-based proposals discussed above, open quantum system approaches (analog) have been recently proposed \cite{rotondo2018open, fiorelli2020signatures, fiorelli2021potts, bodeker2023optimal}. These contributions investigate the conditions under which some quantum many-body spin systems can effectively show associative memory behavior through the use of engineered dissipation. In contrast with unitary proposals, this allows the system to exhibit multiple fixed points, which play the roles of patterns. Notably, they make use of the Hebbian's prescription to embed target spin configurations, i.e. the memories, and therefore the latter can be regarded as classical ones. Within the framework outlined in~\cref{sec:framework}, the quantum map modeling these types of generalizations describes Markovian dynamics, $\Lambda(\cdot) = e^{\mathcal{L} t}(\cdot)$. The evolution governed by the latter is analyzed in these works in specific limiting regimes (e.g., the thermodynamic limit \cite{rotondo2018open, fiorelli2021potts}, or some perturbative regime concerning terms of the dynamical generator \cite{fiorelli2020signatures}) at present. Here, the resulting effective evolution enables the system to operate as an \gls{am}. Within our framework, this can be understood as follows: condition C1, which requires multiple fixed points for the map, is satisfied by employing the Hebbian prescription, which sets classical patterns as fixed points of the effective evolution. Further, the dynamical equations emerging from the latter describe time-dependent variables (rather than operators), i.e., loosely speaking, some classical evolution (see \cite{Fiorelli_2023meanfield} for a rigorous treatment). Such a dynamical system displays finite basins of attraction for each pattern, consistently with condition C2. While these \gls{qam} models in many-body systems are limited to classical regimes, our framework provides a foundation for exploring the quantum regime, presenting a promising avenue for future research.

Looking at further proposals in the existing literature, we identify some models of \gls{qam} that can perform effective association, albeit storing patterns only probabilistically \cite{trugenberger2001probabilistic,cao2017quantum} while other models accomplish perfect storing of patterns in terms fixed points, yet lacking the association property \cite{lewenstein2021storage}. Classically, the two features of stability and association go hand in hand, as can be seen, e.g., in the Hopfield neural network. Here, the non-linear dynamics, equipped with the Hebbian learning rule guarantees that patterns are stable states with non-vanishing basins of attraction, at least below the critical storage capacity. To define a functioning \gls{qam}, however, the stability of patterns and the association between similar states must be individually addressed. In \cref{sec:framework}, we have established the conditions under which we can achieve both features via \gls{cptp} maps.

The two main ingredients in the process of association identified in the discussed framework are dissipation and symmetry. Regarding dissipation, engineered losses act as a mechanism to drive the state of the system into a small subset of long-lived states. This aspect is common to quantum machine learning algorithms such as quantum reservoir computing \cite{sannia2024dissipation,kubota2023qrc}, quantum neural networks \cite{schuld2014quest,beer2020training} or variational quantum algorithms \cite{eisert2024variational,sannia2023variational}, but also in algorithms such as state preparation \cite{Mi2024dissipation,cirac2011preparation}, and quantum error correction \cite{gravina2023critical,gertler2021protecting}. This suggests the usefulness of our framework beyond \gls{qam}. Engineered dissipation can be implemented both in quantum circuits using the collision model algorithm \cite{PhysRevLett.126.130403}, as experimentally demonstrated on the IBM platform \cite{PRXQuantum.4.010324}, and in analog devices using the techniques introduced in \ccite{Verstraete2009}.

Moving forward to the role of symmetries, they allow perfect discrimination between states that fall in the same symmetry sector. Previous research has shown that quantum machine learning models with the same symmetry as the data can avoid training problems and alleviate barren plateaus \cite{cerezo2022group,nguyen2024theory,eisert2023symmetry}. \gls{qam} also takes advantage of symmetries in quantum systems to perform the association process, allowing the storage of stable patterns in open quantum systems featuring strong symmetry or metastable patterns in the case of weak symmetry. For example, in \cref{sec:ex_walk} we show the implementation of autonomous quantum error correction protocols \cite{gravina2023critical,mirrahimi2014universal} where the patterns are fixed points of the dynamics and the error space is corrected dynamically thanks to a strong symmetry by associating the erroneous states with the correct logical qubits \cite{Blume-Kohout2010pra}. In another example proposed in \cref{sec:ex_dd_oscillator}, we show that the metastable phase in a driven-dissipative resonator can be used to correct states generated by discrete-modulated continuous-variable quantum key distribution protocols, both of which share the same discrete rotational symmetry. The developed \gls{qam} framework sets therefore the basis for a broader context of applications, such as quantum memories or error correction, and also contributes to establishing the interplay between dissipation and symmetry in quantum machine learning.

We emphasize that the generality of the proposed formulation allows the storage of arbitrary quantum states as patterns, going beyond the classical-like patterns typically employed in previous formulations of \gls{qam} \cite{ventura2000qam,rotondo2018open,cao2017quantum}. Indeed, the proposed general map can target all possible states, as, for instance, the case of cat states discussed in \cref{sec:ex_dd_oscillator}. By exploiting the general properties of quantum channels, we have constructed a map that is not constrained by the limitations of the classical Hebbian rule. In our case, the storage of quantum patterns is permitted by the learning rule introduced in \cref{eq:kraus_assoc_cond}, which ultimately allows us to deal with quantum data \cite{cerezo2022challenges} and to potentially store an exponentially large number of patterns eﬀiciently. In~\cref{sec:sc} we have separately established the bounds for the storage capacity in the case of quantum outputs, and also in the case of classical outputs, where the effect of measurements was considered. Interestingly, allowing for non-orthogonal patterns can enable an increase of the critical storage capacity for quantum outputs, even though in the case of classical patterns (after measurement) one encounters similar limitations in the association (success) probability, as in classical \gls{am} when correlated patterns are allowed. 

In the case of classification of classical input data,  we can rely on techniques common in quantum machine learning, which use a feature map to encode such data into quantum states \cite{schuld2019feature}. For our approach, and in order to achieve perfect \footnote{Here, we refer to perfect classification as the fact that an input belonging entirely to a class will, at the end of the association process, have components only in the quantum state of that class. } association, it is crucial that data belonging to different labels $\mu$ lead to orthogonal decay states $\ket{\omega^\mu}$, which then evolve into label states $\ket{\mu}$ (orthogonal or not) \cite{MarshallCVZ_19_PRA}. Thus, quantum coding must extract the common features of the $\mu$-th set that are not found in any other class and encode them in our system of interest \cite{lloyd2020embedding}. Of course, finding the best feature map is an open problem in the context of quantum machine learning and beyond the scope of this work. Instead, this work reveals the necessary structure that the final channel must have to perform classification tasks.

In conclusion, we have successfully developed a comprehensive framework for quantum associative memory using CPTP maps. This framework offers several advantages over classical models, such as the ability to encode non-orthogonal states and potentially store an exponentially large number of patterns efficiently. In our formulation, both classical and quantum patterns can be stored, and such patterns can be either stable or metastable dynamical attractors. We also analyzed the role of symmetries, through the definition of basins of attraction, which turn out to be the enabling mechanism for \gls{qam}. These findings open up new possibilities for developing more powerful quantum machine learning algorithms, quantum key distribution, and enhanced error correction techniques. Furthermore they lay the ground to design the most suited implementations in different experimental platforms \cite{labay2023squeezed,MarshEtAl_ccqed_21,fiorelli2020signatures,china2019walks}.

\begin{acknowledgments}
    We acknowledge the Spanish State Research Agency, through the Mar\'ia de Maeztu project CEX2021-001164-M, the COQUSY projects PID2022-140506NB-C21 and -C22  funded by MICIU/AEI/10.13039/501100011033, and by ERDF, EU; MINECO through the QUANTUM SPAIN project, and EU through the RTRP - NextGenerationEU within the framework of the Digital Spain 2025 Agenda. 
     ALM is funded by the University of the Balearic Islands through the project BGRH-UIB-2021.
    EF acknowledges funding by the European Union’s Horizon Europe programme through Grant No. 101105267.
    GLG is funded by the Spanish  Ministerio de Educaci\'on y Formaci\'on Profesional/Ministerio de Universidades and co-funded by the University of the Balearic Islands through the Beatriz Galindo program (BG20/00085). 
\end{acknowledgments}

\bibliography{references.bib}

\appendix

\section{Hopfield Neural Network} \label{apx:hopfield}

Hopfield originally proposed a network of all-to-all connected binary neurons as a content-addressable memory or \gls{am} \cite{hopfield1982am}. Concretely, the model consists of $n$ binary neurons, where the state of the $i$-th neuron, $s_i$, can take two possible values, $s_i=+1$ and $s_i=-1$, corresponding to the firing and resting states, respectively. The state of the compound system can be represented in terms of a $n$-bit string $\vs = (s_1,s_2,...,s_n) $. Further, one can associate the following energy function with the system as follows:
\begin{equation} \label{eq:hopfield_energy}
    E = -\half \sum_{i\neq j} J_{ij} s_i s_j  \ ,
\end{equation}
where $J_{ij}$ represent the coupling between the $i$-th and $j$-th neuron. A deterministic dynamics, evolving the state of the system along trajectories that monotonically decrease the energy function, can be defined in terms of the following single flip of the $i$-th neuron:
\begin{equation} \label{eq:hopfield_update}
    s_i \leftarrow \mathrm{sgn}\inpar{\sum_{j \neq i} J_{ij} s_j } \ .
\end{equation}
In the latter, $\mathrm{sgn}$ is the sign function, letting the $i-$th neuron fire whenever the input signal coming from the other neurons, $\sum_{j \neq i} J_{ij} s_j$,  assumes a positive value. Notice that the input signal plays the role of a local field, $h_i \equiv \sum_{j \neq i} J_{ij} s_j$, acting on the $i$-th neurons. Moreover, the energy function can be expressed in terms of the latter as $E = -\frac{1}{2}\sum_{i}s_i h_i$. Thus, we can see that at each time step the neuron $s_i$ gets aligned with the local field $h_i$, and, correspondingly, the energy undergoes a monotonic decrease, eventually reaching a local minimum, i.e. a stable configuration.

The key point is that stable states can be encoded by means of the coupling parameters $J_{ij}$, in the form of $M$ vectors $\vec{\xi}^\mu$ representing the patterns. Several rules can be used to encode such memories, the Hebbian learning rule being the most prominent example \cite{hebb2005organization}. Here, the couplings are chosen such that
\begin{equation}
    J_{ij} = \frac{1}{n} \sum_{\mu=1}^M \xi_i^\mu \xi_j^\mu\ ,
\end{equation}
and the patterns are further treated as independent and identically distributed random variables. Their distribution is a bimodal one, with $\mathrm{P}[\xi_i^{\mu}=\pm 1] = \frac{1}{2}$. As a result, in the large-$n$ limit patterns are unbiased, $\lim_{n \rightarrow +\infty}\sum_{i}\xi^{\mu}_i/n = 0 $, and uncorrelated $\lim_{n\to\infty} \vec{\xi}^\mu\cdot \vec{\xi}^\nu/n = \delta^{\mu\nu}$.
Under the above conditions, one can show that the patterns $\vec{\xi}^{\mu}$ are stable fixed points of the dynamics \footnote{In fact, also the antipatterns $-\vec{\xi}^\mu$ are stable local minima, and there appear $2M$ configurations that minimize the energy. This is a straightforward consequence of the $\mathds{Z}_2$ symmetry characterizing the energy function.} if $\sqrt{M/n} \ll 1$. In other words, the system behaves as an associative memory. Indeed, patterns are not only fixed points of the dynamics, but they are also stable ones. This means that for each pattern there exists a basin of attraction, i.e. a finite region of the phase space whose points are asymptotically evolved into the pattern itself. 

Hence,  an arbitrary initial state $\vs$ that contains some errors with respect to the patterns is evolved via the dynamics \eqref{eq:hopfield_update} into the most similar one, thus permitting the retrieval of the correct information. 
As already commented in the main text, the maximum number of patterns that can be stored by this kind of associative memory, i.e. its storage capacity, reads $M / n = 0.138$ \cite{amit1985capacity,amit1985spinglass}.

It is worth mentioning that the \gls{hnn} can be further generalized to include some noise, in the form of an effective temperature. The description of the system in this scenario goes beyond the scope of this Appendix, and we refer the reader to some literature on the topic \cite{amit1985capacity,amit1985spinglass}.

\section{Completely Positive Trace-Preserving conditions} \label{apx:cptp_kraus}

A quantum channel is completely positive and trace-preserving if the Kraus operators satisfy the completeness relation \eqref{eq:general_cptp_cond}. Let us enforce this condition for the Kraus operators that are given in the block structure of \cref{eq:general_form_kraus}. Here, the \gls{cptp} condition can be equivalently expressed in terms of the block elements of each Kraus operator, and it reads \cite{albert2019asymptotics}
\begin{eqs}[eqs:kraus_qam_tp_conds]
    &\sum_\alpha (K_\alpha^{\mathrm{S}})^\dagger K_\alpha^{\mathrm{S}} = \Id_{\mathrm{S}} \ ,\label{eq:kraus_qam_tp_s} \\
    &\sum_\alpha (K_\alpha^{\mathrm{S}})^\dagger K_\alpha^{\mathrm{SD}} = \sum_\alpha (K_\alpha^{\mathrm{SD}})^\dagger K_\alpha^{\mathrm{S}} = \mathbb{0}_{\mathrm{SD}} \ ,\label{eq:kraus_qam_tp_sd} \\
    &\sum_\alpha (K_\alpha^{\mathrm{SD}})^\dagger K_\alpha^{\mathrm{SD}} + (K_\alpha^{\mathrm{D}})^\dagger K_\alpha^{\mathrm{D}} = \Id_{\mathrm{D}}\ . \label{eq:kraus_qam_tp_d}
\end{eqs}

In \cref{sec:orthogonal} we introduced the form of the Kraus blocks for associative memory with orthogonal patterns. Using the expression in \cref{eq:kraus_orthogonal_s,eq:kraus_orthogonal_sd,eq:kraus_orthogonal_d} and substituting in \cref{eqs:kraus_qam_tp_conds} we find
\begin{eqs}[eqs:kraus_qam_tp_orth_conds]
    &\sum_\mu\sum_j \insqr{\sum_\alpha\abs{a_{\mu,j}^\alpha}^2}\op{\mu_j} = \Id_{\mathrm{S}} \label{eq:kraus_qam_tp_orth_s} \\
    &\sum_\mu \sum_j \sum_x \insqr{\sum_\alpha (a_{\mu,j}^\alpha)^* b_{\mu,j,x}^\alpha }\op{\mu_j}{\omega_x^\mu} = \mathbb{0}_{\mathrm{SD}} \label{eq:kraus_qam_tp_orth_sd} \\
    &\sum_\mu\sum_{x,y=1}^{d_\mu} \inbrc{\sum_\alpha \insqr{\sum_j (b_{\mu,j,y}^\alpha)^* b_{\mu,j,x}^\alpha} + \delta_{xy}\abs{c_{\mu,x}}^2} \op{\omega_x^\mu}{\omega_y^\mu} \nonumber\\ &\qquad = \Id_{\mathrm{D}} \label{eq:kraus_qam_tp_orth_d}
\end{eqs}
which reduce to the expressions in \cref{eqs:kraus_orth_cptp_cond}. 

When considering non-orthogonal patterns (see \cref{sec:nonrothogonal}) we obtain similar results with the addition of an extra index to account for the \gls{dfs} where the non-orthogonal patterns belong. 

\section{Derivation of Kraus parameters} \label{apx:derivation_kraus}

In this section, we derive the form of the parameters that form the Kraus operators in \cref{sec:construct_qam} for the orthogonal and general formulation of the \gls{qam} map.

\subsection{Orthogonal patterns} \label{apx:derivation_orthogonal}

Combining \cref{eqs:kraus_orth_cptp_cond,eq:kraus_assoc_cond}, we can further express the rate $\kappa_{xy}^{\mu}$ in terms of the coefficients defining the Kraus operators in \cref{eq:kraus_orthogonal_s,eq:kraus_orthogonal_d,eq:kraus_orthogonal_sd}. To do so, as an intermediate step, by exploiting the expressions of $\rho_\mu$ and $K_{\alpha,\mu}^{\mathrm{SD}}$ [see \cref{eq:kraus_orthogonal_s}], one can write \cref{eq:kraus_assoc_cond} in terms of the following system of equations
\begin{equation} \label{eq:kraus_assoc_cond_diag}
    \sum_\alpha (b_{\mu,k,y}^\alpha)^* b_{\mu,j,x}^\alpha = \kappa_{xy}^\mu  u_j^\mu \delta_{jk} \ .
\end{equation}
Thus, by combining the latter with \cref{eq:kraus_orth_cptp_cond_3}, the rates $\kappa_{xy}^\mu$ read
\begin{equation} \label{eq:kraus_assoc_rate_orth}
    \kappa_{xy}^\mu = \delta_{xy}\insqr{1 - \sum_\alpha \abs{c_{\mu,x}^\alpha}^2} \ ,
\end{equation}
where we must have that $0 < \kappa_{xx} < 1$, whence $\sum_\alpha \abs{c_{\mu,x}}^2 < 1$. Therefore, the rate at which the decaying states are associated with the corresponding pattern is the inverse of the rate at which they vanish, consistently with trace preservation. Finally, plugging the last expression for the rates $\kappa_{xy}^{\mu} $ in the constraint \eqref{eq:kraus_assoc_cond_diag}, we obtain the following equation
\begin{equation}
    \sum_\alpha (b_{\mu,k,y}^\alpha)^* b_{\mu,j,x}^\alpha = \insqr{1 - \sum_\alpha \abs{c_{\mu,x}^\alpha}^2}  u_j^\mu \delta_{xy}\delta_{jk} \ .
\end{equation}
A simple solution of the latter exists, assuming that there are as many Kraus operators as the number of combinations of $(j,x)$, characterized only by the non-vanishing parameter $b_{\mu,j,x}^{(j,k)}$. If this is the case, we can write
\begin{equation} \label{eq:kraus_expression_b}
    b_{\mu,j,x}^\alpha = \sqrt{u_j^\mu \insqr{1 - \sum_\alpha \abs{c_{\mu,x}^\alpha}^2}} \delta_{\alpha, j + s_\mu x} \ .
\end{equation}
In the above expression, there is no restriction on the choice of the parameters $c_{\mu,x}^\alpha$ as long as their modulus-squared sum is smaller than one. In terms of the parameters $a_{\mu,j}^\alpha$, we notice that \cref{eq:kraus_orth_cptp_cond_2} imposes a further limit on the Kraus operators displaying $a_{\mu,j}^\alpha \neq 0$. Those Kraus operators with a non-zero $b_{\mu,j,x}^\alpha$ need to have a vanishing $a_{\mu,j}^\alpha$, and vice-versa. Therefore, at least two additional Kraus operators, equipped with a non-vanishing diagonal part of $K_{\alpha,\mu}^{\mathrm{S}}$, need to exist. 

Notice that the solution obtained in \cref{eq:kraus_expression_b} corresponds to the configuration requiring the least amount of Kraus operators. However, such a solution is not unique, as by increasing the number of Kraus operators, one can find other maps $\Lambda$ satisfying the constraints. To conclude, we stress that the parameters $a_{\mu,j}^\alpha$, $b_{\mu,j,x}^\alpha$ and $c_{\mu,x}^\alpha$ can be considered as degrees of freedom that can be tuned to guarantee the associativity condition defined by \cref{eq:def_assoc_cond}. 

\subsection{General formulation} \label{apx:derivation_dfs}

Similar to the previous case, we can find an expression for the mixing parameters for Kraus operators acting on \gls{dfs}. Substituting \cref{eq:kraus_gen_sd} in \cref{eq:kraus_gen_assoc}, we obtain
\begin{equation} \label{eq:kraus_gen_assoc_params}
    \sum_\alpha (b_{\ell',k,y}^{\alpha,\tau})^* b_{\ell,j,x}^{\alpha,\tau} = \delta_{\ell\ell'} \kappa_{xy}^\ell \mel{{\tau}_j}{\rho_{\ell}^{(\tau)}}{{\tau}_k}\ .
\end{equation}
By combining \cref{eq:kraus_gen_sd} and \cref{eq:kraus_gen_d}, the CPTP condition \eqref{eq:kraus_qam_tp_d} (see \cref{apx:cptp_kraus}) reads
\begin{equation} \label{eq:kraus_nonorth_cptp_cond_3}
    \sum_\alpha \insqr{ \sum_{j=1}^{s_\tau} (b_{\ell',j,y}^{\alpha,\tau})^* b_{\ell,j,x}^{\alpha,\tau} } + \delta_{\ell\ell'}\delta_{xy} \abs{c_{\ell,x}^{\alpha,\tau}}^2 = \delta_{\ell\ell'}\delta_{xy}\ .
\end{equation}
Thus, similarly to the derivation of  \cref{eq:kraus_assoc_rate_orth},  an expression for the rate parameters $\kappa_{xy}^\ell$ can be obtained.  Substituting \cref{eq:kraus_gen_assoc_params} into \cref{eq:kraus_nonorth_cptp_cond_3}, we get
\begin{equation} \label{eq:kraus_assoc_rate_nonorth}
    \kappa_{xy}^\ell = \delta_{xy}\insqr{1 - \sum_\alpha \abs{c_{\ell,x}^{\alpha,\tau}}^2}\ ,
\end{equation}
and the expression for the mixing parameters reads
\begin{equation}
    \sum_\alpha (b_{\ell',k,y}^{\alpha,\tau})^* b_{\ell,j,x}^{\alpha,\tau} = \delta_{\ell\ell'}\delta_{xy} \insqr{1 - \sum_\alpha \abs{c_{\ell,x}^{\alpha,\tau}}^2}\mel{{\tau}_j}{\rho_{\ell}^{(\tau)}}{{\tau}_k}\ .
\end{equation}
For the particular case $\rho_{\ell}^{(\tau)} = \op*{\psi_{\ell}^{(\tau)}}$ where $\ket*{\psi_{\ell}^{(\tau)}} = \sum_{j=1}^{s_\tau} [\psi_{\ell}^{(\tau)}]_j \ket{{\tau}_j}$ we may find a solution 
\begin{equation} \label{eq:kraus_expression_b_nonorth}
    b_{\ell,j,x}^{\alpha,\tau} = [\psi_\ell^{(\tau)}]_j\sqrt{1 - \sum_\alpha \abs{c_{\tau,x}^\alpha}^2} \delta_{\alpha - \alpha_0, x m_\nperp^{\mathrm{max}} + \ell} \ .
\end{equation}
with $m_\nperp^{\mathrm{max}} = \max_\tau m_{\nperp}^{(\tau)}$ and $\alpha_0$ is the number of Kraus operators with non-vanishing elements in the stable and decaying part ($\alpha_0 \in [2, (N^S)^2 + (N^D)^2]$). The index of the Kraus operators runs over
$\alpha=1,\dots, m_{max}^\nperp d_{max}$ where $d_{max} = \max_\tau d_{\ell}^{(\tau)}$ since parameters belonging to different $\cX_\tau$ and different $\cS_\mu$ can go in the same Kraus as they are not restricted by the \gls{cptp} conditions.

\section{Relation with genuine incoherent operation} \label{apx:lewenstein}

In this work, we emphasize the necessity that \glspl{qam} feature non-empty decaying subspace, $\cD$. In this way, the states belonging to the latter can be associated with the patterns, i.e. states of the invariant subspace $\cS$, via the dynamical maps of \cref{sec:construct_qam}. In this appendix, we show how such a framework can encompass a model previously proposed \cite{lewenstein2021storage, marconi2022role}, in which a \gls{cptp} map is derived displaying multiple fixed points. Notably, and at variance with our construction, such a map stores an entire basis of the Hilbert space as patterns. Hence, in our formalism, this scenario corresponds to the entire Hilbert space being the stable space, $\cS = \hilbert$, leaving an empty decaying space $\cD = \emptyset$. In fact, as we show in the following, it is possible to derive the above-commented map as a particular case of the \gls{qam} that we obtained in \cref{sec:orthogonal}. 

Let $\{\ket{\mu} \}$ represent a basis of the Hilbert space, with each state $\ket{\mu}$ being invariant under the map $\Lambda$. According to \cref{eq:kraus_orthogonal_s}, the Kraus operators  read
\begin{equation}
    K_\alpha^{\mathrm{S}} = K_\alpha = \sum_\mu a_\mu^\alpha \op{\mu} \ ,
\end{equation}
which, by assumption, defines the Kraus operator acting on the whole Hilbert space $\hilbert$. In the Jamiolkowski-Choi-Sudarshan (JCS) representation \cite{choi1975completely}, the map $\Lambda$ can be expressed in terms of the following linear operator
\begin{equation}
    J_\Lambda = \sum_\alpha \kketbbra{K_\alpha}{K_\alpha} \ ,
\end{equation}
such that $J_{\Lambda} \in \cB(\hilbert \otimes \hilbert)$, and where we have also introduced
\begin{equation}
    \kket{K_\alpha} = \sum_\mu a_\mu^\alpha \kket{\mu\mu} \ ,
\end{equation}
with $\kket{\mu \mu} \in \hilbert \otimes \hilbert$.
By plugging the above in the JCS representation of the map, we obtain
 \begin{align}
    J_\Lambda &= \sum_\alpha \sum_{\mu,\nu} (a_\nu^\alpha)^* a_\mu^\alpha \kketbbra{\mu\mu}{\nu\nu} \nonumber \\
    &= \sum_\mu \kketbbra{\mu\mu}{\mu\mu} + \sum_{\mu\neq\nu} \insqr{\sum_\alpha (a_\nu^\alpha)^* a_\mu^\alpha} \kketbbra{\mu\mu}{\nu\nu} \ ,
\end{align}
having further employed the \gls{cptp} condition of the Kraus operators [\cref{eq:kraus_qam_tp_s}] in the second line. Identifying the term in square brackets with coefficients of the form $1 + \gamma_{\mu\nu}$ we recover the expression of the map proposed \ccite{lewenstein2021storage}. Moreover, being the spectral radius of CPTP maps bounded by $1$, so that $\abs{a_\mu^\alpha} \le 1$, we obtain $\abs{1 + \gamma_{\mu\nu}} \le 1$ if $a_\mu^\alpha \neq a_\nu^\alpha$, this leading to vanishing coherences upon many repetition of the map.
\end{document}